\documentclass[11pt]{article}
\usepackage[margin=1in]{geometry}
\usepackage{amsmath}
\usepackage{amssymb}
\usepackage{bm}
\usepackage{booktabs}
\usepackage{graphicx}
\usepackage{pgfplots}
\pgfplotsset{compat=1.18}
\usetikzlibrary{arrows.meta,calc,decorations.pathreplacing}
\usepackage[round,authoryear]{natbib}

\usepackage{amsthm}
\newtheorem{proposition}{Proposition}

\newcommand{\Ipred}{N_{\mathrm{pred}}}
\newcommand{\Imeas}{I_{\mathrm{meas}}}
\newcommand{\sigfair}{\sigma_{\mathrm{fair}}}
\newcommand{\sigCRB}{\sigma^{\mathrm{CRB}}}
\newcommand{\sigAlg}{\sigma^{\mathcal{A}}}

\newcommand{\FCRB}{\mathbf{F}}
\newcommand{\xvec}{\mathbf{x}}
\newcommand{\Amat}{\mathbf{A}}

\begin{document}

\title{Per-Bundle Statistical Limits and Learned-Prior Inversion in
Multiplexed X-ray Imaging, with Application to Temporal CT}
\author{G M Besson\\ ForeVision XRCT, Boulder, WY 82923, USA\\
\texttt{ForeVision\_XRCT@protonmail.com}}
\maketitle

\begin{abstract}
\noindent\textbf{Objective.}
Temporal CT (TCT) fires $N_S = 3$ X-ray sources simultaneously onto a
shared detector, creating a pre-reconstruction inverse problem: each
bundle of $M = 5$ Poisson intensities sums unlabeled photon
contributions from $K = 3$ line integrals. Because the measurement sums
exponentials rather than forming one Beer--Lambert product, $\log$ and
$\sum$ do not commute and the inversion is nonlinear. We quantify the
dose cost this multiplexing imposes and how closely estimators can
approach the resulting limit.

\noindent\textbf{Approach.}
We treat the $5\times3$ bundle as a model problem for multiplexed photon
aggregation, with TCT the motivating instance. Closed-form
Cram\'er--Rao bounds (CRBs) are expressed as dose-inflation factors against an
equal-dose single-source floor, and two estimators --- a structured classical
per-bundle estimator (SNN1) and a physics-motivated residual network ---
are benchmarked on three datasets: i.i.d.\ synthetic, an analytical
phantom, and single-patient bundles.

\noindent\textbf{Main results.}
Aggregation imposes a structural loss: at equal attenuation only
$43\%$ of single-source Fisher information survives for the endpoint
paths and $23\%$ for the middle path, fixing constant CRB inflation
ratios $\sqrt{7/3} \approx 1.53$ and $\sqrt{13/3} \approx 2.08$. SNN1
reaches the endpoint CRBs to within a few percent but degrades on the middle
path under photon starvation. A learned joint prior closes much of this gap and, on
single-patient data, pushes middle-path noise below the equal-dose floor
--- a Bayesian effect (interpolation within one anatomy) from the prior,
not the architecture, and not a generalizable dose gain. A mismatched
prior fails out-of-distribution.

\noindent\textbf{Significance.}
The structure (a Poisson sum of
exponentials) and methodology (Fisher-information efficiencies,
dose-inflation accounting) are not specific to TCT: they characterize
any measurement that aggregates attenuated photons before counting,
transferring with only the incidence matrix
respecified. Whether a learned prior yields a generalizable dose
reduction is the open question a companion paper addresses through a
multi-patient corpus.
\end{abstract}

\noindent{\it Keywords}: computed tomography, temporal CT, multi-source CT, X-ray source multiplexing, pre-reconstruction inverse problem, Cram\'er--Rao bound, Fisher information, dose efficiency, learned prior

\section{Introduction}

\subsection*{Temporal resolution, dose, and multi-source CT}

In cardiac and other dynamic CT, temporal resolution sets a hard
ceiling on image quality: anatomical motion during the acquisition
window produces blurring and streak artifacts that reconstruction
cannot undo \citep{kalender2011}. In a conventional single-source
scanner the temporal resolution is governed by the gantry rotation
time, which is bounded by the mechanical and gravitational limits of
spinning a heavy source--detector assembly. Shortening the acquisition
window to freeze motion also collects fewer photons, so temporal
resolution, image noise, and radiation dose are coupled: improving any
one typically taxes another.

One established way past the rotation-speed limit is to use more than
one source. Dual-source CT places two source--detector pairs at a fixed
angular offset, each with its \emph{own} detector, improving temporal
resolution without changing the measurement model --- every detector
still sees a single source \citep{flohr2006dsct}. Temporal CT (TCT)
takes a different route. A wide-aperture detector is \emph{shared}
among $N_S = 3$ sources that are simultaneously in view at every gantry
position \citep{besson2015medphys}, sampling up to $N_S$ angular
positions per interval at a total photon budget of $N_S N_0$. The price
of sharing the detector is that the three sources' photons arrive
\emph{superimposed and unlabeled}: before any reconstruction can
proceed, the aggregated intensities must be separated back into their
constituent line integrals.

This separation is not a linear unmixing. Each detector reading is a sum
of attenuated source intensities --- a sum of exponentials rather than
the single Beer--Lambert product of standard CT --- so the logarithm and
the summation do not commute, and the inversion is nonlinear and far
more poorly conditioned than the per-ray log-transform of single-source
CT. The central questions of this paper are therefore dosimetric as much
as algorithmic: how much dose efficiency does the multiplexing cost
relative to a conventional scan of equal total dose, how much of that
cost is fixed by the geometry, and how much can a well-designed
estimator recover?

\subsection*{The Temporal CT architecture and the bundle structure}

TCT realizes simultaneous multi-source exposure through a double-drum
geometry: a source array and a wide-aperture detector array rotate on
independent, coaxial gantries \citep{besson2015medphys}. Under the
angular-sampling condition described there, $N_S = 3$ sources are in
view of the detector at every gantry position, so three angular samples
are acquired in the interval a single-source scanner needs for one
(Fig.~\ref{fig:geometry}).

Because the three beams illuminate overlapping regions of the shared
detector, a neighborhood of detector cells records intensities to which
more than one source contributes. The measurements therefore partition
into \emph{bundles}: small groups of detector readings that jointly
constrain a common set of line integrals. Inversion is local to a
bundle --- recover the line integrals from its summed-intensity
readings --- and it is this local problem, repeated across the detector
and across views, that stands between the raw TCT data and any
conventional reconstruction.

For analysis we adopt a bundle of $M = 5$ readings constraining $K = 3$
line integrals $(x_1, x_2, x_3)$, related through an $M\times K$ (here
$5\times3$) binary
incidence matrix $\Amat$ (Section~\ref{sec:forward}); its status as a
representative model rather than a specific realized geometry is
addressed below. The structure is deliberately asymmetric: two readings
are \emph{dedicated}, each measuring a single endpoint path ($x_1$ or
$x_3$) in isolation, while the three central readings mix contributions,
so the middle path $x_2$ is never observed alone --- only in aggregate.
This asymmetry is the origin of the unequal information loss that the
remainder of the paper quantifies.

\begin{figure}[t]   
\centering
\begin{tikzpicture}[font=\small]

\begin{scope}
  \fill[gray!12] (2.5,1.9) ellipse (1.9 and 0.95);          
  \fill[gray!18] (0,0) rectangle (5,0.45);                  
  \foreach \i in {0,1,2,3,4,5}{\draw[gray!55] (\i,0) -- (\i,0.45);}
  \foreach \i/\lab in {0/1,1/2,2/3,3/4,4/5}{
     \node[font=\footnotesize] at (\i+0.5,0.225) {$Y_{\lab}$};}
  \node[anchor=north,font=\footnotesize] at (2.5,-0.08){shared wide-aperture detector};
  \fill[blue!50,opacity=0.16]        (1.5,3.5) -- (0,0.45) -- (3,0.45) -- cycle;
  \fill[red!60,opacity=0.16]         (2.5,3.5) -- (1,0.45) -- (4,0.45) -- cycle;
  \fill[green!55!black,opacity=0.16] (3.5,3.5) -- (2,0.45) -- (5,0.45) -- cycle;
  \foreach \sx/\lab in {1.5/1,2.5/2,3.5/3}{
     \node[circle,fill=black,inner sep=1.7pt,
           label={[font=\footnotesize]above:{$S_{\lab}$}}] at (\sx,3.55) {};}
  \node[font=\footnotesize] at (2.5,4.2){$N_S=3$ sources, simultaneous};
  \draw[decorate,decoration={brace,amplitude=5pt,mirror}] (0,-0.5) -- (5,-0.5)
     node[midway,below=4pt,font=\footnotesize]{one bundle ($M=5$ readings)};
  \node[font=\bfseries] at (-0.35,3.9){(a)};
\end{scope}

\begin{scope}[xshift=6.8cm,yshift=0.9cm]
  \def\cw{0.85}\def\ch{0.55}
  \foreach \r/\j in {0/0, 1/0,1/1, 2/0,2/1,2/2, 3/1,3/2, 4/2}{     
     \fill[blue!22] (\j*\cw,{(4-\r)*\ch}) rectangle (\j*\cw+\cw,{(4-\r)*\ch+\ch});}
  \foreach \r in {0,1,2,3,4}{\foreach \j in {0,1,2}{               
     \draw[gray!55] (\j*\cw,{(4-\r)*\ch}) rectangle (\j*\cw+\cw,{(4-\r)*\ch+\ch});}}
  \foreach \j/\lab in {0/1,1/2,2/3}{                               
     \node[font=\footnotesize] at (\j*\cw+\cw/2,{5*\ch+0.22}) {$x_{\lab}$};}
  \foreach \r/\lab in {0/1,1/2,2/3,3/4,4/5}{                       
     \node[anchor=east,font=\footnotesize] at (-0.12,{(4-\r)*\ch+\ch/2}) {$Y_{\lab}$};}
  \draw[green!45!black,very thick] (0,{4*\ch}) rectangle (3*\cw,{5*\ch});  
  \draw[green!45!black,very thick] (0,0)        rectangle (3*\cw,{1*\ch}); 
  \node[anchor=west,font=\footnotesize,green!45!black] at (3*\cw+0.15,{4.5*\ch}){dedicated};
  \node[anchor=west,font=\footnotesize,green!45!black] at (3*\cw+0.15,{0.5*\ch}){dedicated};
  \node[font=\footnotesize,align=center,anchor=north] at (\cw+\cw/2,-0.18)
     {$x_2$ never\\ measured alone};
  \node[anchor=north,font=\footnotesize] at (1.3,-1.0)
     {$Y_j \sim \mathrm{Poisson}\!\big(\textstyle\sum_k A_{jk}\,N_0\,e^{-x_k}\big)$};
  \node[font=\bfseries] at (-0.7,3.0){(b)};
\end{scope}

\end{tikzpicture}
\caption{The Temporal CT bundle problem. (a)~Schematic of the TCT
principle: $N_S=3$ simultaneously active sources share a wide-aperture
detector, and their overlapping beams make each detector reading a sum
of source contributions, grouping the $M=5$ readings of a local
neighbourhood into a \emph{bundle}. The depiction is illustrative; the
realized double-drum geometry is described in~\cite{besson2015medphys}.
(b)~The $5\times3$ model problem analysed here: the binary incidence
matrix $\Amat$ relates the bundle readings $Y_j$ to the $K=3$ line
integrals $x_k$ via $Y_j\sim\mathrm{Poisson}(\sum_k A_{jk}N_0e^{-x_k})$.
The endpoint paths $x_1,x_3$ each have a dedicated reading ($Y_1,Y_5$),
while the middle path $x_2$ appears only in mixed readings --- the
origin of the asymmetric information loss analysed below.}
\label{fig:geometry}
\end{figure}

\subsection*{The $\boldsymbol{5\times3}$ bundle as a model problem}

The $5\times3$ system studied here does not reproduce any specific TCT
geometry, and we make no claim that it is canonical. It is a
\emph{model problem} that captures
the defining difficulty of the TCT principle --- the inversion of
summed, unlabeled photon intensities, in which $\log$ and $\sum$ do not
commute --- together with its characteristic asymmetry (dedicated
endpoint rows, an absent middle-path row, and the resulting unequal
information loss). The study is deliberately confined to the projection
domain --- it characterizes the per-bundle inversion as a
statistical-limits problem, and image formation is out of scope. Three
claims of decreasing strength follow, and we
separate them deliberately. First, the analysis framework --- the
Fisher-information and Cram\'er--Rao machinery and the dose-inflation
accounting --- is geometry-agnostic by construction: only the system
matrix $\Amat$ specializes to a chosen geometry, so the framework
transfers directly to any realized TCT design --- and, more broadly, to
other multiplexed measurements that aggregate attenuated photon
contributions, where it again enters only through $\Amat$. Second, the
\emph{qualitative} difficulty it reveals --- nonlinear path coupling,
the collapse of middle-path information under photon starvation, and the
consequent dose cost --- is generic to any architecture that sums
overlapping attenuated contributions before counting --- coded-source
imaging and overlapping-projection geometries among them --- and is
therefore representative
rather than incidental. Third, the specific
numerical floors ($\sqrt{7/3}$, $\sqrt{13/3}$, and the dose-inflation
values reported below) are illustrative of the \emph{magnitude} of the
difficulty for this instance and would be recomputed for a chosen
geometry. The model is exercised on realistic inputs: the
patient-derived bundles draw their attenuation statistics from a
clinical CT volume rather than a synthetic distribution, so the
difficulty is stressed on real anatomy. Finally, we state what the model
omits. It treats the estimation problem under a monochromatic,
scatter-free idealization; physical source attribution --- plausibly
through spectral encoding of the sources --- and scatter mitigation lie
outside it and are the subject of subsequent work. Drawing these
boundaries is what makes the representativeness claim usable: the model
characterizes the core estimation difficulty of TCT and the methods that
address it, not a complete engineering solution.

\subsection*{Two sources of performance loss}

Two qualitatively different losses degrade performance, and separating
them clarifies what each method can fix. The \emph{aggregation loss} is
structural: summing photons at the detector destroys
the information that the $N_S K = 9$ latent Poisson contributions would
carry if observed individually, and the Cram\'er--Rao bound fixes how
much survives --- at equal attenuation only $3/7 \approx 43\%$ for
endpoint paths and $3/13 \approx 23\%$ for the middle path
(Section~\ref{sec:crb}). No estimator using only the five bundle
measurements can recover the lost fraction. The \emph{algorithmic
inefficiency} is contingent and reducible: given those five
measurements, a classical algorithm extracts only part of the Fisher
information they carry. SNN1 nearly removes this inefficiency for the
endpoints; the neural network addresses what remains on the middle path
not by undoing the aggregation loss but by supplying a learned joint
prior over $(x_1, x_2, x_3)$ that, on anatomically realistic data, can
dominate the collapsed Fisher information and push the estimate below
the frequentist bound --- a concentrated-prior (Bayesian) effect that,
as we discuss, reflects interpolation within the training anatomy rather
than a generalizable recovery of aggregation-lost information. We report
all performance as the dose-inflation
factor of Eq.~\eqref{eq:dif} --- the squared ratio of an estimator's
standard deviation to the equal-dose single-source floor.

\subsection*{Contributions}

This paper makes four contributions. (i)~A closed-form Fisher-information
and CRB analysis of the Poisson sum-of-exponentials bundle, whose
equal-attenuation factorization yields the exactly constant inflation
ratios $\sqrt{7/3}$ and $\sqrt{13/3}$ and bundle efficiencies
$\eta_k = 3/7$ and $3/13$. Because the architecture enters only through
the incidence matrix, these efficiencies serve as figures of merit for
any $M\times K$ multiplexed-measurement geometry, within TCT and beyond;
a second, four-source geometry is worked explicitly in the supplementary
material. (ii)~SNN1, a structured classical per-bundle
estimator attaining the endpoint CRBs to within a few percent. (iii)~An
evaluation of a physics-motivated residual network across three datasets
of increasing sinogram structure (RND~$\to$~SGS~$\to$~PIS), showing that
a concentrated single-anatomy prior can drive the middle-path noise
below the equal-dose single-source floor by interpolating within that
anatomy's bundle statistics, while a mismatched prior fails
out-of-distribution --- evidence that the sub-floor behaviour is
prior-specific rather than a generalizable dose gain. (iv)~An ablation in which raw measurements alone
and measurements-plus-warm-start converge to indistinguishable accuracy,
indicating that the network learns the nonlinear inverse map implicitly.
Three directions that extend beyond the per-bundle problem --- exploiting
inter-bundle sinogram correlations (Section~\ref{sec:interbundle}),
using the efficiencies $\eta_k$ to optimize bundle geometry, and
co-designing the source-flux profile through bow-tie filtration and
tube-current modulation --- are pursued in a companion paper.

\subsection*{Relation to prior work}

Dual-source CT illuminates from two tubes simultaneously but with
non-overlapping beams and detectors, so each tube--detector pair remains
an independent line-integral measurement with no inversion problem of
the present type \citep{flohr2006dsct}. The TCT architecture
\citep{besson2015medphys} is distinguished by simultaneous, overlapping
source projections that require per-bundle pre-reconstruction inversion.
Neural networks for CT have been applied in the image
\citep{chen2017,jin2017} and sinogram \citep{zhang2018} domains. The
overlapping-beam transmission forward model itself has precedent in
emission-tomography attenuation imaging \citep{yu2000}. Cram\'er--Rao
bounds have characterized task-based image quality in CT before
\citep{pineda2012}; the present work instead applies the CRB in the
projection domain, to the per-bundle inversion itself. To our
knowledge, the present work is the first to treat multi-source
CT bundle inversion as a standalone estimation problem, to benchmark
per-bundle estimators against a closed-form per-bundle CRB, and to bring
a learned joint prior to bear on it.

\section{Forward model and problem formulation}
\label{sec:forward}
\subsection{Multi-source intensity model}

We adopt the monochromatic Beer--Lambert model with Poisson photon
statistics: each source emits $N_0$ photons per measurement interval at
a single effective energy, the detector counts individual arrivals, and
the measurement noise is exactly Poisson --- the standard framework for
analysing statistical limits in CT \citep{barrett1994,fessler1994}.
Polychromatic beam hardening, scatter, and energy-integrating detector
response are deferred to future work, isolating the statistical limits
of multi-source aggregation from spectral effects. Throughout, $N_0$ is
the incident photon count per source per interval (the air-scan
calibration value) and all intensities are photon counts.

In single-source CT each detector reading involves one path, with
expected count $N_0 e^{-x}$ for a line integral $x$. In TCT a detector
cell may receive photons from up to $N_S = 3$ paths at once, and because
photons from different sources are physically indistinguishable the
reading is a \emph{sum} of independent Poisson variables,
$\Imeas[j] = \sum_k Y_{j,k}$ with
$Y_{j,k}\sim\mathrm{Poisson}(N_0 A[j,k] e^{-x_k})$. By the additivity of
Poisson variables, $\Imeas[j]$ is itself Poisson with mean
\begin{equation}
  \Ipred[j](\xvec) = N_0 \sum_{k=1}^{3} A[j,k]\, e^{-x_k}.
  \label{eq:forward}
\end{equation}
Equation~\eqref{eq:forward} is a \emph{sum of exponentials}, not the
exponential of a sum. This is the single structural fact from which the
rest of the paper follows: $\log$ and $\sum$ no longer commute, so the
per-ray log-transform that linearizes single-source CT does not apply.
This structure is not specific to TCT --- it is the signature of any
measurement that aggregates attenuated photon contributions before
counting, such as coded-source imaging and overlapping-projection
geometries. The aggregation loss requires that the summed contributions
not be separable by any recorded degree of freedom; a scheme that
resolves one --- an energy-discriminating detector that separates
sources by spectrum, say --- recovers part of the information and partly
escapes the bound, as noted in Section~\ref{subsec:limitations}.
Indeed, the same
summed-transmission structure has long been recognized in
emission-tomography attenuation imaging: with overlapping beams, one
detector element records photons from different sources that traversed
different paths, and \citet{yu2000} developed a monotonic
maximum-likelihood reconstruction to account for it. The present work
differs in isolating the un-summing as a standalone per-bundle
estimation problem with its own statistical limits, rather than folding
the overlap into a full image reconstruction. The
architecture enters the analysis below only through the incidence matrix
$\Amat$, so the Fisher-information and dose-inflation framework that
follows transfers to those settings with $\Amat$ respecified.

\subsection{System matrix and its structure}

For the $N_S = 3$ bundle the system matrix
$\Amat \in \{0,1\}^{M\times K} = \{0,1\}^{5\times3}$ follows the sliding-window source
arrangement
\begin{equation}
  \Amat =
  \begin{pmatrix}
    1&0&0\\ 1&1&0\\ 1&1&1\\ 0&1&1\\ 0&0&1
  \end{pmatrix},
  \label{eq:sysmat}
\end{equation}
so the predicted intensities are $\Ipred[1] = N_0 e^{-x_1}$,
$\Ipred[2] = N_0(e^{-x_1}+e^{-x_2})$,
$\Ipred[3] = N_0(e^{-x_1}+e^{-x_2}+e^{-x_3})$,
$\Ipred[4] = N_0(e^{-x_2}+e^{-x_3})$, and $\Ipred[5] = N_0 e^{-x_3}$.
Three structural features govern the inversion. First, rows~1 and~5 are
\emph{dedicated}: each depends on a single endpoint, so the one-step
estimate $\hat{x}_1 = -\ln(\Imeas[1]/N_0)$ asymptotically attains the
single-ray CRB $e^{x_1/2}/\sqrt{N_0}$, and likewise for $x_3$. Second,
the middle path $x_2$ has \emph{no dedicated row} --- it appears only in
the mixed rows~2--4, always alongside another path --- and this
asymmetry is the geometric origin of its larger noise penalty
($\sqrt{13/3}$ versus $\sqrt{7/3}$; Section~\ref{sec:crb}). Third, the
system is algebraically benign: rows~1 and~5 isolate $x_1$ and $x_3$ and
any mixed row then supplies $x_2$, so $\mathrm{rank}(\Amat) = 3$ and the
condition number is only $\kappa(\Amat) \approx 3.19$. The
limitation that matters is therefore not linear but statistical:
the Fisher information, not $\Amat$, \emph{vanishes} at high
attenuation, its magnitude falling as $e^{-x}$.

\subsection{Known calibration and the latent counts}

$N_0$ is a \emph{known} parameter, not an unknown: in practice it is
obtained from an air-scan calibration --- every detector cell measures the
unattenuated flux with no object in the beam --- and is therefore known for
all bundles in that view. What cannot be known are the $N_S K = 9$
latent Poisson contributions $\{Y_{j,k}\}$, which the detector sums into
five observable counts. The Fisher information measures exactly how much
information about $\xvec$ survives this compression from nine latent to
five observed counts.

\subsection{Log-domain behaviour: the domination principle}

The difficulty is sharpest in the log domain. For a mixed row the
noiseless log-transform $y_j = -\ln(\Imeas[j]/N_0)$ obeys the
log-sum-exp identity
$y_j = \min_k x_k - \ln(1 + \sum_{k\neq k^*} e^{-(x_k - x_{k^*})})$,
where $k^*$ is the least-attenuated contributing path. When the path
contrast is large the correction vanishes and $y_j \approx \min_k x_k$:
the reading is \emph{dominated by the brightest path}, and the heavily
attenuated co-paths become invisible. This domination is a property of
the measurement, not of any estimator, and it is why the middle path
loses information under photon starvation. We therefore develop the
Cram\'er--Rao analysis directly in the photon-count domain of
Eq.~\eqref{eq:forward}, where the Fisher information stays well defined
and the MLE keeps its $e^{x/2}/\sqrt{N_0}$ scaling at any attenuation.

\subsection{The inverse problem}

Given noisy counts $\{\Imeas[j]\}_{j=1}^{5}$ and known $N_0$, the bundle
inversion problem is to estimate $\xvec = (x_1, x_2, x_3)^\top$ with
$x_k \in [0, x_{\max}]$, seeking unbiased (or asymptotically unbiased)
estimators whose variance approaches $[\FCRB^{-1}(\xvec)]_{kk}$. This
bound is the interior, regular-model Cram\'er--Rao limit; at the box
boundaries $x_k \in \{0, x_{\max}\}$ it need not hold, and we flag the
constrained regime where it arises. The
support ceiling $x_{\max} = 9.2$ is the largest line integral through a
$46\times24$~cm body ellipse at $\mu = 0.20$~cm$^{-1}$; at
$N_0 = 10^5$ this transmits only $\approx 10$ photons --- the
photon-starved regime where the middle-path penalty bites.

\subsection{The dose-inflation factor}

To make the cost of multi-source aggregation interpretable in clinical
terms, we measure every estimator against the noise a conventional
single-source CT scan would achieve \emph{at the same total dose}. Here ``dose''
denotes the incident photon budget $N_S N_0$ under a matched
monochromatic idealization --- an information-theoretic normalization
that is a \emph{proxy} for dosimetric dose, not a substitute for it:
equal photon budget need not mean equal absorbed dose (mGy) once
geometry, beam quality, and the spatial distribution of energy
deposition differ between a three-source and a single-source exposure.
TCT expends $N_S = 3$ source exposures of $N_0$ photons per measurement
interval, a total budget of $3 N_0$. A single-source scan given that
same budget attains, for a path of attenuation $x$, the Cram\'er--Rao
standard deviation
\begin{equation}
  \sigfair(x) \;=\; \frac{e^{x/2}}{\sqrt{3 N_0}},
  \label{eq:sigfair}
\end{equation}
the equal-dose single-source floor. For an estimator $\hat{x}_k$ with
standard deviation $\sigma_{\mathrm{algo}}(x_k)$ we define the
\emph{dose-inflation factor}
\begin{equation}
  \mathrm{DIF}_k \;=\;
    \frac{\sigma_{\mathrm{algo}}^2(x_k)}{\sigfair^2(x_k)},
  \qquad
  \frac{\sigma_{\mathrm{algo}}(x_k)}{\sigfair(x_k)}
    \;=\; \sqrt{\mathrm{DIF}_k}.
  \label{eq:dif}
\end{equation}
This dose reading is exact \emph{at the statistical bound}. The Poisson
information is linear in the photon budget ($\mathbf{F}\propto N_0$;
Section~\ref{sec:crb}), so the CRB and the floor $\sigfair$ both scale
as $1/\sqrt{N_0}$ and $\mathrm{DIF}$ is precisely the factor by which an
equally-tasked single-source scan's dose would have to change to match
it: $\mathrm{DIF} > 1$ is a dose penalty of the multi-source geometry,
$\mathrm{DIF} < 1$ a lower-noise image at equal dose. This holds for any
unbiased estimator, whose variance inherits the $1/N_0$ scaling --- but
\emph{not} for a biased, prior-regularized one. A concentrated prior
decouples variance from photon count, so for the learned network
$\sqrt{\mathrm{DIF}}$ is a precision ratio relative to the floor, not a
realizable dose trade. We therefore report $\sqrt{\mathrm{DIF}}$
throughout, reading it as a dose factor at the bound and, where the
prior dominates, strictly as precision relative to the equal-dose
floor. The per-bin $\sqrt{\mathrm{DIF}}$ values reported below are pooled
over the per-bundle dose distribution: $N_0$ is drawn per bundle from a
tube-current-modulation (TCM) model that raises $N_0$ with bundle-mean
attenuation (Section~\ref{subsec:data}), so each pooled figure is an expectation over that
distribution, not a fixed-$N_0$ quantity, while the exact dose-factor
reading above holds bundle-by-bundle at fixed $N_0$.

\section{Statistical limits: Cram\'er--Rao bounds}
\label{sec:crb}
We now quantify the aggregation loss exactly. We fix the single-ray
reference, derive the bundle Fisher information matrix from the Poisson
likelihood, and invert it in closed form at equal attenuation to obtain
the constant inflation ratios at that symmetric reference point.

\subsection{Single-ray reference and the equal-dose floor}
\label{subsec:single_crb}

The Fisher information a measurement carries about a scalar parameter
$x$ is $\mathcal{I}(x) = \mathbb{E}\!\big[(\partial_x \ell)^2\big]
= -\,\mathbb{E}\big[\partial_x^2 \ell\big]$, with $\ell(x)$ the
log-likelihood; it bounds the variance of any unbiased estimator through
the Cram\'er--Rao inequality
$\mathrm{Var}(\hat{x}) \ge \mathcal{I}(x)^{-1}$. For a single
monochromatic Poisson measurement
$N \sim \mathrm{Poisson}(N_0 e^{-x})$ the information evaluates to
$\mathcal{I}(x) = N_0 e^{-x}$, so this bound is
\begin{equation}
  \sigma^{\mathrm{single}}(x)
    = \frac{1}{\sqrt{N_0 e^{-x}}} = \frac{e^{x/2}}{\sqrt{N_0}}.
  \label{eq:single_CRB}
\end{equation}
This grows as the square root of an exponential: each unit of line
integral cuts the transmitted count by $e$ and amplifies the noise by
$\sqrt{e} \approx 1.65$. Because the bundle fires three sources at $N_0$
photons each, the matched reference is an equal-dose single-source scan
at the same total budget $3N_0$; Fisher information being additive over
independent observations, its floor is
$\sigfair(x) = e^{x/2}/\sqrt{3N_0}$ of Eq.~\eqref{eq:sigfair}. Every
inflation ratio below is referenced to $\sigfair$.

\subsection{The bundle Fisher information matrix}
\label{subsec:FIM}

The five measurements are independent Poisson variables with means
$\Ipred[j](\xvec)$, so the negative log-likelihood is
\begin{equation}
  \mathcal{L}(\xvec) = \sum_{j=1}^{5}
    \bigl[\Ipred[j] - \Imeas[j]\ln\Ipred[j]\bigr] + \mathrm{const}.
  \label{eq:NLL}
\end{equation}
With $\partial\Ipred[j]/\partial x_k = -N_0 A[j,k] e^{-x_k}$, the Poisson
score is
\[
  \partial\mathcal{L}/\partial x_k
  = \sum_j (1 - \Imeas[j]/\Ipred[j])\,\partial\Ipred[j]/\partial x_k;
\]
taking the expectation of its outer product and using
$\mathrm{Var}\,\Imeas[j] = \Ipred[j]$ gives the FIM --- the
multiparameter generalization of $\mathcal{I}(x)$ --- in closed form.

\begin{proposition}[Bundle FIM]
\label{prop:FIM}
The Fisher information matrix entries are
\begin{equation}
  F_{ik} = N_0^2 e^{-x_i} e^{-x_k}
    \sum_{j=1}^{5} \frac{A[j,i]\,A[j,k]}{\Ipred[j](\xvec)}.
  \label{eq:FIM_general}
\end{equation}
\end{proposition}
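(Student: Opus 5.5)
The plan is to compute the Fisher information as the expected outer product of the score, $F_{ik}=\mathbb{E}[\partial_i\mathcal{L}\,\partial_k\mathcal{L}]$, using the negative log-likelihood of Eq.~\eqref{eq:NLL}. Since the five readings $\Imeas[j]$ are independent Poisson variables, $\mathcal{L}$ is a sum of per-row terms. The FIM is therefore the sum over $j$ of single-row Fisher informations, and it suffices to treat one row and then sum.

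For row $j$, differentiating $\Ipred[j]-\Imeas[j]\ln\Ipred[j]$ with respect to $x_k$ gives the score contribution
\[
s_{j,k}=\Bigl(1-\frac{\Imeas[j]}{\Ipred[j]}\Bigr)\frac{\partial\Ipred[j]}{\partial x_k},
\qquad
\frac{\partial\Ipred[j]}{\partial x_k}=-N_0A[j,k]e^{-x_k}.
\]
The derivative follows directly from the forward model Eq.~\eqref{eq:forward}, because $\Ipred[j]$ is linear in the $e^{-x_k}$. The factor $(1-\Imeas[j]/\Ipred[j])$ has mean zero, since $\mathbb{E}\,\Imeas[j]=\Ipred[j]$. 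Its second moment is $\mathrm{Var}\,\Imeas[j]/\Ipred[j]^2=1/\Ipred[j]$, by the Poisson equidispersion property.

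Next I handle the cross terms. By independence, for $j\neq j'$ we have $\mathbb{E}[s_{j,i}s_{j',k}]$ equal to the product of two zero means, so these terms vanish. This leaves
\[
F_{ik}=\sum_j\frac{1}{\Ipred[j]}\,\frac{\partial\Ipred[j]}{\partial x_i}\,\frac{\partial\Ipred[j]}{\partial x_k}
=\sum_j\frac{N_0^2A[j,i]A[j,k]e^{-x_i}e^{-x_k}}{\Ipred[j]}.
\]
The two minus signs in the derivatives cancel, and the factor $N_0^2e^{-x_i}e^{-x_k}$ does not depend on $j$, so it pulls out of the sum. That yields Eq.~\eqref{eq:FIM_general}.

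As a cross-check, I would compute $-\mathbb{E}[\partial_i\partial_k\mathcal{L}]$. The term involving $\partial_i\partial_k\Ipred[j]$ is multiplied by $(1-\Imeas[j]/\Ipred[j])$ and so has zero expectation. The remaining term reproduces the same expression.

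No step is really an obstacle. The only point requiring care is the regularity needed to equate the two forms of the Fisher information, or at least to justify the score having mean zero. This holds because every $\Ipred[j]>0$: each row of $\Amat$ has at least one nonzero entry and $N_0e^{-x_k}>0$. The Poisson likelihood is then smooth in $\xvec$ with finite moments, so we work in the interior regular regime already flagged in the problem formulation.
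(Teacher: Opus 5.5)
Your proof is correct and follows the paper's route: the expected outer product of the Poisson score, with cross-row terms vanishing by independence, $\mathrm{Var}\,\Imeas[j]=\Ipred[j]$, and $\partial_{x_i}\Ipred[j]=-N_0A[j,i]e^{-x_i}$ substituted at the end. One minor slip is in your optional cross-check: $\mathcal{L}$ is the \emph{negative} log-likelihood, so the Hessian form is $F_{ik}=+\mathbb{E}[\partial_i\partial_k\mathcal{L}]$, and the $-\mathbb{E}[\partial_i\partial_k\mathcal{L}]$ you wrote would return the negative of the matrix.
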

\begin{proof}
For Poisson data $\mathrm{Var}\,\Imeas[j]=\Ipred[j]$, so
\[
  F_{ik}=\sum_{j}\Ipred[j]^{-1}
    (\partial_{x_i}\Ipred[j])(\partial_{x_k}\Ipred[j]).
\]
With $\Ipred[j]=N_0\sum_k A[j,k]\,e^{-x_k}$ one has
$\partial_{x_i}\Ipred[j]=-N_0\,A[j,i]\,e^{-x_i}$; substituting yields
Eq.~\eqref{eq:FIM_general}.
\end{proof}

\noindent Writing $\alpha_k \triangleq e^{-x_k}$ and
$P_j \triangleq \Ipred[j]/N_0 = \sum_k A[j,k]\alpha_k$, the six
independent entries are
\begin{align}
  F_{11} &= N_0\alpha_1^2\!\left(
    \tfrac{1}{\alpha_1} + \tfrac{1}{\alpha_1+\alpha_2}
    + \tfrac{1}{\alpha_1+\alpha_2+\alpha_3}\right), \label{eq:F11}\\
  F_{22} &= N_0\alpha_2^2\!\left(
    \tfrac{1}{\alpha_1+\alpha_2} + \tfrac{1}{\alpha_1+\alpha_2+\alpha_3}
    + \tfrac{1}{\alpha_2+\alpha_3}\right), \label{eq:F22}\\
  F_{12} &= N_0\alpha_1\alpha_2\!\left(
    \tfrac{1}{\alpha_1+\alpha_2} + \tfrac{1}{\alpha_1+\alpha_2+\alpha_3}
    \right), \label{eq:F12}\\
  F_{13} &= N_0\alpha_1\alpha_3 \cdot
    \tfrac{1}{\alpha_1+\alpha_2+\alpha_3}, \label{eq:F13}
\end{align}
with $F_{33} = F_{11}$ and $F_{23} = F_{12}$ under the exchange
$\alpha_1 \leftrightarrow \alpha_3$. The single term in $F_{13}$
reflects that paths~1 and~3 share only the fully mixed row~3. The
high-attenuation information loss is visible directly: when one path is
bright ($\alpha_{k'} \gg \alpha_k$), $P_j$ is dominated by
$\alpha_{k'}$ and $1/P_j$ is small, so the bright path \emph{dilutes}
the information row~$j$ carries about the dim path. This dilution is
irreversible --- the matrix-level statement of the domination
principle.

\subsection{Equal-attenuation factorization}
\label{subsec:factorization}

At the symmetric operating point $x_1 = x_2 = x_3 = x$ the predicted
intensities are $P_j = \alpha, 2\alpha, 3\alpha, 2\alpha, \alpha$, and
substituting into Eqs.~\eqref{eq:F11}--\eqref{eq:F13} factors a common
$N_0\alpha = N_0 e^{-x}$:
\begin{equation}
  \FCRB(x) = N_0 e^{-x}\, \mathbf{M},
  \qquad
  \mathbf{M} =
  \begin{pmatrix}
    \tfrac{11}{6} & \tfrac{5}{6} & \tfrac{1}{3} \\[2pt]
    \tfrac{5}{6} & \tfrac{4}{3} & \tfrac{5}{6} \\[2pt]
    \tfrac{1}{3} & \tfrac{5}{6} & \tfrac{11}{6}
  \end{pmatrix},
  \label{eq:FIM_factored}
\end{equation}
e.g.\ $M_{11} = 1 + \tfrac{1}{2} + \tfrac{1}{3} = \tfrac{11}{6}$ and
$M_{22} = \tfrac{1}{2} + \tfrac{1}{3} + \tfrac{1}{2} = \tfrac{4}{3}$,
the rest following by centrosymmetry. Crucially $\mathbf{M}$ is
\emph{independent of $x$}: hence
$\FCRB^{-1}(x) = (e^x/N_0)\mathbf{M}^{-1}$ and the inflation ratios
depend only on $\mathbf{M}$ --- that is, only on the structure of
$\Amat$ --- through $r_k = \sqrt{3\,[\mathbf{M}^{-1}]_{kk}}$
(Section~\ref{subsec:closed_form}), and are exactly constant for all $x \geq 0$ within this
family. Away from equality they vary (Fig.~\ref{fig:dosefloor}). Two
regimes are worth separating here. At equal attenuation the
\emph{conditioning} of $\FCRB$ is fixed --- it is that of $\mathbf{M}$,
with $\kappa(\mathbf{M}) = 6$ --- and only the overall \emph{magnitude}
changes, vanishing as $e^{-x}$ so that information about all three paths
shrinks together. Away from equality $\FCRB$ instead becomes genuinely
\emph{ill-conditioned}: bright-path dilution collapses the dark-path
eigendirection (the middle path, when it is the darkest), and this ---
not any property of $\Amat$, which stays well-conditioned throughout
(Section~\ref{subsec:SVD}) --- is the matrix-level origin of the
diverging middle-path penalty.

\subsection{Closed-form CRBs and inflation ratios}
\label{subsec:closed_form}

Inverting $\mathbf{M}$ by cofactors,
\begin{equation}
  \mathbf{M}^{-1} = \frac{1}{9}
  \begin{pmatrix} 7 & -5 & 1 \\ -5 & 13 & -5 \\ 1 & -5 & 7 \end{pmatrix},
  \qquad \det\mathbf{M} = \tfrac{9}{4},
  \label{eq:Minv}
\end{equation}
so the per-path CRB and its inflation ratio read directly off the
diagonal of $\mathbf{M}^{-1}$,
\[
  \sigCRB_k = \frac{e^{x/2}}{\sqrt{N_0}}\sqrt{[\mathbf{M}^{-1}]_{kk}},
  \qquad
  r_k = \frac{\sigCRB_k}{\sigfair} = \sqrt{3\,[\mathbf{M}^{-1}]_{kk}}.
\]
With $[\mathbf{M}^{-1}]_{11} = [\mathbf{M}^{-1}]_{33} = \tfrac{7}{9}$ and
$[\mathbf{M}^{-1}]_{22} = \tfrac{13}{9}$ this gives at equal attenuation
$\sigCRB_1 = \sigCRB_3 = \tfrac{\sqrt{7}}{3}\, e^{x/2}/\sqrt{N_0}$ and
$\sigCRB_2 = \tfrac{\sqrt{13}}{3}\, e^{x/2}/\sqrt{N_0}$, giving the
inflation ratios
\begin{equation}
  r_1 = r_3 = \sqrt{\tfrac{7}{3}} \approx 1.528,
  \qquad
  r_2 = \sqrt{\tfrac{13}{3}} \approx 2.082,
  \label{eq:ratios}
\end{equation}
both exactly constant for all $x \geq 0$. Equivalently, the
\emph{bundle efficiencies} $\eta_k = 1/r_k^2$ are
$\eta_1 = \eta_3 = 3/7 \approx 43\%$ and $\eta_2 = 3/13 \approx 23\%$:
of the nine latent Poisson counts, aggregation into five sums retains
only this fraction of the single-source Fisher information. The
middle-path penalty of $\sqrt{13/7} \approx 1.36$ over the endpoints
comes entirely from $x_2$ lacking a dedicated row --- every row
informing $x_2$ shares its Poisson noise with another path --- and is a
structural property of $\Amat$, not of the attenuation level.

These constant ratios hold only at the symmetric point; they are a
reference, not a worst case. As Fig.~\ref{fig:dosefloor} shows, away
from equality the endpoint ratio rises toward $\sqrt{3} \approx 1.73$
when one endpoint darkens, while the middle-path ratio diverges when
$x_2$ is the darkest path and, conversely, falls toward~$1$ when both
endpoints are opaque (sources~1 and~3 then emit nothing and rows~2--4
become three clean observations of $x_2$). The full
attenuation-dependent CRB follows from numerically inverting
$\FCRB(\xvec)$ in Eq.~\eqref{eq:FIM_general}; the asymptotic limits are
catalogued in the supplementary material. Because the efficiency vector
$(\eta_1,\ldots,\eta_K)$ derives from $\Amat$ alone, it serves as a
figure of merit for comparing alternative $M\times K$
multiplexed-measurement geometries at equal dose --- within TCT (the
design lever pursued in the companion paper) and in other
aggregated-photon modalities sharing the same forward structure. The
overlapping-beam transmission scans of emission tomography
\citep{yu2000} are one such modality: the efficiencies and
dose-inflation accounting developed here apply there with only $\Amat$
respecified, offering a statistical-limit perspective complementary to
the reconstruction algorithms used in that setting.

\begin{figure}[t]   
\centering
\begin{tikzpicture}
\begin{axis}[
  width=13cm, height=8.4cm, ymode=log, log basis y=10,
  xlabel={path attenuation of the darkening path,\ \ $x=\mu L$
          \ \ (remaining two paths held at $x=3$)},
  ylabel={penalty vs.\ single-source CT\ \ $\sqrt{\mathrm{DIF}}
          =\sigma_{\mathrm{CRB}}/\sigma_{\mathrm{fair}}$},
  xmin=0, xmax=9.3, ymin=0.85, ymax=46,
  xtick={0,1,2,3,4,5,6,7,8,9},
  ytick={1,2,5,10,20,40}, yticklabels={1,2,5,10,20,40},
  grid=both, major grid style={gray!14}, minor grid style={gray!6},
  tick label style={font=\footnotesize}, label style={font=\footnotesize},
  clip=false ]

  \addplot[black, densely dotted, thick, domain=0:9.3, forget plot]{1};
  \addplot[gray,  densely dotted, domain=0:9.3, forget plot]{1.7321};
  \node[font=\scriptsize, anchor=center] at (axis cs:5.7,1.27){single-source CT};
  \node[font=\scriptsize, gray!50!black, anchor=east] at (axis cs:9.08,1.98){$\sqrt3$};

  \draw[gray!70, dashed] (axis cs:3,0.85) -- (axis cs:3,46);
  \node[font=\scriptsize, gray!55!black, anchor=south, fill=white, inner sep=1pt]
        at (axis cs:3,43){balanced $x_1{=}x_2{=}x_3$};

  \addplot[red!75!black, very thick, dashed, mark=square*, mark size=1.0pt,
           mark options={solid}]
    coordinates {(0.00,1.0930)(0.25,1.1173)(0.50,1.1475)(0.75,1.1846)(1.00,1.2298)(1.25,1.2846)(1.50,1.3504)(1.75,1.4289)(2.00,1.5217)(2.25,1.6309)(2.50,1.7589)(2.75,1.9081)(3.00,2.0817)(3.25,2.2830)(3.50,2.5159)(3.75,2.7849)(4.00,3.0947)(4.25,3.4509)(4.50,3.8594)(4.75,4.3271)(5.00,4.8616)(5.25,5.4716)(5.50,6.1667)(5.75,6.9579)(6.00,7.8578)(6.25,8.8804)(6.50,10.0419)(6.75,11.3603)(7.00,12.8565)(7.25,14.5538)(7.50,16.4787)(7.75,18.6615)(8.00,21.1362)(8.25,23.9415)(8.50,27.1215)(8.75,30.7257)(9.00,34.8107)};
  \addplot[blue!70!black, very thick, mark=*, mark size=1.0pt]
    coordinates {(0.00,1.0868)(0.25,1.1076)(0.50,1.1323)(0.75,1.1611)(1.00,1.1942)(1.25,1.2312)(1.50,1.2717)(1.75,1.3147)(2.00,1.3593)(2.25,1.4041)(2.50,1.4479)(2.75,1.4893)(3.00,1.5275)(3.25,1.5619)(3.50,1.5920)(3.75,1.6180)(4.00,1.6399)(4.25,1.6581)(4.50,1.6731)(4.75,1.6853)(5.00,1.6951)(5.25,1.7029)(5.50,1.7091)(5.75,1.7141)(6.00,1.7180)(6.25,1.7210)(6.50,1.7234)(6.75,1.7253)(7.00,1.7268)(7.25,1.7280)(7.50,1.7289)(7.75,1.7296)(8.00,1.7301)(8.25,1.7305)(8.50,1.7309)(8.75,1.7311)(9.00,1.7313)};

  \addplot[red!75!black, only marks, mark=*, mark size=1.5pt, forget plot] coordinates {(3,2.0817)};
  \addplot[blue!70!black, only marks, mark=*, mark size=1.5pt, forget plot] coordinates {(3,1.5275)};
  \node[font=\scriptsize, red!75!black,  anchor=south east, inner sep=1pt]
        at (axis cs:2.85,2.32){$\sqrt{13/3}$};
  \node[font=\scriptsize, blue!70!black, anchor=north, inner sep=1pt]
        at (axis cs:3.05,1.30){$\sqrt{7/3}$};

  \draw[red!60!black, thin] (axis cs:3.0,6.0) -- (axis cs:4.2,3.32);
  \node[font=\scriptsize, red!75!black, align=center, anchor=south east]
        at (axis cs:3.05,6.2){middle path $x_2$\\(no dedicated reading)};
  \node[font=\scriptsize, blue!70!black, anchor=south]
        at (axis cs:5.9,1.82){endpoint path $x_1$ (dedicated reading)};
\end{axis}
\end{tikzpicture}
\caption{The fundamental dose floor of the TCT bundle---a Cram\'er--Rao
result, independent of any algorithm and of the photon budget $N_0$.
The per-path penalty $\sqrt{\mathrm{DIF}}=\sigma_{\mathrm{CRB}}/
\sigma_{\mathrm{fair}}$ is the standard-deviation cost relative to the
equal-dose single-source CT floor $\sigma_{\mathrm{fair}}=e^{x/2}/
\sqrt{3N_0}$ (penalty $=1$), shown as one bundle path darkens while the
other two are held at a representative $x=3$. At the balanced point
($x=3$, all three paths equal) the curves cross their
\emph{equal-attenuation} values, which are exactly constant at every
dose level: $\sqrt{7/3}\approx1.53$ for the endpoints and
$\sqrt{13/3}\approx2.08$ for the middle. Away from balance the
asymmetry is stark: the endpoint penalty saturates at $\sqrt3\approx
1.73$, capped by its dedicated reading $Y_1/Y_5$, whereas the middle
path---never measured in isolation---diverges without bound through
bright-path dilution, exceeding $30\times$ by $x=9$. This structural,
asymmetric cost is what the learned prior is designed to mitigate.}
\label{fig:dosefloor}
\end{figure}

\subsection{Two-level gap decomposition}
\label{subsec:decomposition}

For any estimator achieving standard deviation $\sigAlg_k$, the gap to
the floor factors multiplicatively,
\begin{equation}
  \frac{\sigAlg_k}{\sigfair}
    = \underbrace{\frac{\sigAlg_k}{\sigCRB_k}}
        _{\text{algorithmic inefficiency}}
      \cdot
      \underbrace{\frac{\sigCRB_k}{\sigfair}}
        _{\text{physics penalty } r_k \ge 1}.
  \label{eq:decomposition}
\end{equation}
The physics penalty $r_k$ is fixed by $\Amat$; the
algorithmic inefficiency is the legitimate target of better estimation.
For unbiased estimators it is bounded below by~$1$, attained
asymptotically by the maximum-likelihood estimator. For a biased
estimator that incorporates a prior $p(\xvec)$ --- including a neural
network trained on a joint anatomical prior --- it can fall below~$1$,
in which case the operative bound is the Bayesian CRB rather than
Eq.~\eqref{eq:ratios} (Section~\ref{sec:discussion}). This is the regime the patient-trained network enters at high
attenuation on a single anatomy --- a concentrated-prior effect we
interpret as interpolation within that anatomy rather than recovery of
aggregation-lost information (Section~\ref{sec:discussion}). The two
factors of Eq.~\eqref{eq:decomposition} structure the results that
follow.

\section{Classical inversion: the SNN1 baseline}
\label{sec:classical}
Two classical baselines frame the learned estimator: a linear inversion
that fails for a diagnostic reason, and SNN1, the structured
per-bundle estimator we adopt as the classical ceiling throughout.

\subsection{A legitimate linear inverse, and why it underperforms}
\label{subsec:SVD}

The forward model is exactly linear in the transmissions
$\alpha_k = e^{-x_k}$, since $\Imeas/N_0 \approx \Amat\,\boldsymbol{\alpha}$,
so one can invert for $\boldsymbol{\alpha}$ by least squares and recover
$\hat{x}_k = -\ln\hat{\alpha}_k$ --- a legitimate linear procedure, not
a linearization. It is nonetheless only a baseline, dominated by the count-domain
estimators below, and the reasons are statistical rather than numerical.
Conditioning is not the culprit: $\Amat$ is well conditioned
($\kappa \approx 3.19$), so regularization buys nothing. Nor is the
$-\ln$ step: the single-row log estimate $-\ln(\Imeas[j]/N_0)$ is the
(asymptotically) efficient single-source MLE and attains the single-ray
CRB (Section~\ref{sec:forward}), so on the dedicated endpoint rows the
linear inverse is already near-optimal. The shortfall is concentrated on
the middle path, through three statistical --- not algebraic --- defects.
First, ordinary least squares weights all five rows equally, whereas the
Poisson variance of each $\Imeas[j]/N_0$ is proportional to its own mean
and so differs across rows and with attenuation; an unweighted inverse is
therefore not the efficient (Poisson-weighted) estimator away from equal
attenuation. Second, $x_2$ is recovered as a small difference of brighter
endpoint contributions, so its error inflates precisely when $x_2$ is the
darkest path --- the same bright-path dilution the Fisher analysis
quantifies (Section~\ref{sec:crb}), not a numerical artifact. Third, in
the photon-starved regime the recovered transmissions $\hat{\alpha}_k$
are small and can fall to or below zero, where $-\ln\hat{\alpha}_k$ is
biased or undefined --- a positivity breakdown that no regularization of
the well-conditioned $\Amat$ repairs. A count-domain estimator that respects the Poisson
statistics and the positivity constraint is therefore required; we
retain the linear inverse only as a warm-start --- for the SNN1
iteration below and for the network's V2 variant (supplementary
material).

\subsection{SNN1: symmetric Newton via $N_0$ subtraction}
\label{subsec:SNN1}

SNN1 works directly in the photon-count domain and exploits the
bundle's structural asymmetry. It is warm-started from the SVD
transmission estimate of Section~\ref{subsec:SVD} --- which supplies the
initial middle path $x_2$, the one path with no dedicated row --- and
then refines that estimate through two phases per iteration.
\textbf{Phase~A} anchors the endpoints from their dedicated rows (1
and~5), taking a damped Newton step toward each single-source MLE,
\begin{equation}
  x_k^{A} = -\ln\!\Bigl[\mathrm{clip}\bigl(N_0 e^{-x_k}
    + f\,\Delta N_k,\; 1,\, \infty\bigr)/N_0\Bigr],
  \label{eq:SNN1_A}
\end{equation}
where $\Delta N_k = \Imeas[r(k)] - N_0 e^{-x_k}$ is the dedicated-row
residual, with the endpoint row map $r(1)=1$ and $r(3)=5$ (so $x_3$ is
anchored from row~5, not row~3), and $f \in (0,1]$ an annealing fraction.
Clipping at one photon imposes an effective maximum attenuation and
biases estimates in photon-starved bundles. \textbf{Phase~B} handles the three
mixed rows (2--4), apportioning each row residual among its active paths
in proportion to their predicted fractional intensities (Fisher-score
weighting) and updating $x_k$ as the intensity-weighted mean of the
per-row estimates, clipped to $[0, 9.5]$ --- a box constraint set just
above the data support ceiling $x_{\max}=9.2$, so it binds only on the
over-shooting middle-path estimates at the highest bins (the origin of
the constrained-estimator bias noted in Fig.~\ref{fig:snn1crb}). The fraction $f$ anneals
linearly from $0.80$ to $0.05$ over at most 50 iterations, with early
stopping at $\max_k |x_k^{(n)} - x_k^{(n-1)}| < 5\times10^{-6}$.

This structure lets SNN1 reach the per-bundle Fisher limit wherever the
geometry allows it. On i.i.d.\ data it brings the endpoints $x_1, x_3$
to within $1$--$2\%$ of their CRBs across bins~4--10 --- essentially the
optimum for an unbiased classical estimator (Fig.~\ref{fig:snn1crb}).
The middle path is different: SNN1 sits at $1.03$--$1.05\times$ its CRB
at moderate attenuation, but the gap widens to $1.4$--$1.7\times$ at
bins~6--7, exactly the regime where the $\sqrt{13/3}$ penalty and
bright-path dilution starve the mixed rows of information about $x_2$.
This residual middle-path gap is not closed by SNN1: it reaches the
endpoint CRBs but stays above the middle-path bound, and whether the
shortfall is an intrinsic limit of per-bundle classical estimation or
SNN1-specific slack is not settled here (it would require the exact
constrained MLE). Either way, it is precisely what a learned
prior must address.

At the two highest bins (8--9) the comparison changes character. The
photon-starved middle-path estimates would overshoot $x_{\max}$, and the
$[0,9.5]$ box constraint clips them, trading bias for a reduced spread.
SNN1's standard deviation then dips \emph{below} the per-bundle CRB
(open markers in Fig.~\ref{fig:snn1crb}) --- but this is the ordinary
consequence of biasing an estimator, not a recovery of information: the
per-bundle CRB lower-bounds only the variance of \emph{unbiased}
estimators (Section~\ref{subsec:decomposition}), so these sub-CRB ratios
are not a fair efficiency comparison. The same escape from the unbiased
bound, achieved later by a learned prior rather than a hard clip, is
what the network exploits on patient data (Section~\ref{subsec:uc1}).

Convergence is fast on informative data (7--15
iterations on i.i.d.\ and patient bundles); the concentrated phantom
dataset provides weaker Phase~B gradients and needs up to 200
(supplementary material).

\begin{figure}[t]
\centering
\begin{tikzpicture}
\begin{axis}[
  width=12cm, height=7.4cm,
  xlabel={attenuation bin $k$\ \ ($\mu L\in[k{-}1,k)$)},
  ylabel={SNN1 StdDev $/$ per-bundle CRB},
  xmin=3.4, xmax=9.6, ymin=0.3, ymax=2.0,
  xtick={4,5,6,7,8,9}, ytick={0.5,1.0,1.5},
  grid=both, major grid style={gray!14}, minor grid style={gray!6},
  tick label style={font=\footnotesize}, label style={font=\footnotesize},
  clip=false ]

  \addplot[black, thick, dashed, domain=3.4:9.6, forget plot]{1};
  \node[font=\scriptsize, anchor=north] at (axis cs:5.15,0.965){per-bundle CRB (Fisher bound)};

  \addplot[blue!70!black, very thick, mark=*, mark size=1.4pt]
     coordinates {(4,1.051)(5,1.032)(6,1.020)(7,1.013)(8,1.018)(9,1.021)};

  \addplot[red!75!black, very thick, mark=square*, mark size=1.6pt]
     coordinates {(4,1.028)(5,1.047)(6,1.701)(7,1.395)};
  \addplot[red!75!black, very thick, dashed, forget plot]
     coordinates {(7,1.395)(8,0.741)(9,0.416)};
  \addplot[red!75!black, only marks, mark=square, mark size=2.1pt, forget plot]
     coordinates {(8,0.741)(9,0.416)};

  \node[font=\scriptsize, blue!70!black, anchor=south west] at (axis cs:4.05,1.24)
     {endpoints $x_1,x_3$: $\le 5\%$ above CRB};
  \draw[blue!70!black, thin] (axis cs:4.5,1.23) -- (axis cs:5.0,1.05);
  \node[font=\scriptsize, red!75!black, anchor=south] at (axis cs:6.0,1.80)
     {middle $x_2$: classical apportionment gap};
  \node[font=\scriptsize, red!55!black, anchor=east] at (axis cs:7.72,0.70)
     {box-constrained (biased)};
  \draw[->, red!55!black, thin] (axis cs:7.78,0.71) -- (axis cs:7.95,0.735);
\end{axis}
\end{tikzpicture}
\caption{How close does the best classical per-bundle estimator get to
the dose floor? Ratio of SNN1 per-path StdDev to the per-bundle
Cram\'er--Rao bound, by attenuation bin (RND test set, $10^5$ bundles).
\textbf{Endpoint paths $x_1,x_3$} (blue): SNN1 lands within $5\%$ of the
bound at every bin and tightens toward it as attenuation rises---the
classical estimator essentially exhausts the per-bundle Fisher
information for the dedicated paths. \textbf{Middle path $x_2$} (red):
near the bound at moderate attenuation (bins~4--5), but a classical
\emph{apportionment gap} opens once the $\sqrt{13/3}$ Fisher penalty
bites ($1.70\times$ at bin~6, $1.40\times$ at bin~7). At bins~8--9 the
box-constrained estimate becomes biased and its StdDev falls below the
unbiased CRB (open markers), so the ratio there is not a fair efficiency
comparison. The residual middle-path gap at bins~6--7 is precisely what
a learned joint prior can close.}
\label{fig:snn1crb}
\end{figure}

\section{Physics-motivated neural-network inversion}
\label{sec:nn}
\subsection{Inputs and physics features}
\label{subsec:variants}

Two input variants test whether explicit preprocessing helps. \textbf{V1}
takes only the five normalized intensities $\{I_j/N_0\}$ and
$\log_{10} N_0$ (6 inputs); \textbf{V2} adds the SVD warm-start estimate
$\hat{\xvec}_{\mathrm{SVD}}$ (9 inputs). Both are augmented with ten
physics-derived features motivated by the FIM analysis: the five
log-transformed measurements $y_j = -\ln(I_j/N_0 + \varepsilon)$ (exact
for the dedicated endpoint rows, a low-noise endpoint signal) and five
row-level Fisher weights
$\hat{W}_k^{(j)} = A[j,k]\,e^{-\hat{x}_k^{\mathrm{SVD}}} /
\sum_{k'} A[j,k']\,e^{-\hat{x}_{k'}^{\mathrm{SVD}}}$, which let the
network down-weight rows carrying negligible Fisher information about a
given path --- the same adaptive weighting Phase~B of SNN1 performs.
Effective input dimensions are 16 (V1) and 19 (V2).

\subsection{Architecture}
\label{subsec:architecture}

The estimator is a physics-motivated residual network (PMRN): a linear
projection to a 256-dimensional hidden space, four gated residual blocks
\begin{equation}
  \mathbf{h}' = \mathbf{h} + \mathbf{g}\odot
    f\!\bigl(\mathbf{W}_2\, f(\mathrm{LN}(\mathbf{W}_1\mathbf{h}))\bigr),
  \qquad \mathbf{g} = \sigma(\mathbf{W}_g\mathbf{h}),
  \label{eq:resblock}
\end{equation}
with $f = \mathrm{ELU}$ and layer normalization \citep{ba2016} across
widths $256\!\to\!256\!\to\!128\!\to\!64$, and a zero-initialized linear
head producing a correction $\hat{\Delta}\xvec$. For V2 the estimate is
residual on the warm-start,
\begin{equation}
  \hat{\xvec} = \mathrm{softplus}\bigl(
    \hat{\xvec}_{\mathrm{SVD}} + \hat{\Delta}\xvec\bigr),
  \label{eq:output}
\end{equation}
so at initialization $\hat{\Delta}\xvec = \mathbf{0}$ and the network
reproduces the SVD estimate --- a stable starting point --- while the
softplus enforces $\hat{x}_k > 0$. V1 applies the correction without the
SVD offset. Both variants have $\approx 8.1\times10^5$ parameters.

\subsection{Datasets: an informativeness spectrum}
\label{subsec:data}

We evaluate on three datasets that share the forward
model~\eqref{eq:forward} and the per-bundle TCM dose model defined
below, but differ in how much joint structure a prior
could exploit (Table~\ref{tab:datasets}). They form a deliberate
progression from no structure to a rich single-anatomy prior, which is
what lets us separate algorithmic skill from prior exploitation in
Section~\ref{sec:results}.

\begin{table}[!t]
\centering\footnotesize
\renewcommand{\arraystretch}{1.15}
\caption{The three evaluation datasets, ordered by how much joint
structure a prior could exploit. $\sigma_{\bar{x}}$ is the bundle-mean
standard deviation (the dominant structure indicator); $\rho(x_1,x_2)$
the intra-bundle adjacent-path correlation.}
\label{tab:datasets}
\begin{tabular}{lccc}
\toprule
Property & RND & SGS & PIS \\
\midrule
Line-integral source & i.i.d.\ Beta mixture & geometric chest phantom & single patient CT \\
Max $\mu L$ & 9.2 & 8.93 & 9.2 \\
Bundle-mean std $\sigma_{\bar{x}}$ & $\approx 2.50$ & $\approx 0.44$ & $\approx 0.78$ \\
Intra-bundle $\rho(x_1,x_2)$ & $\approx 0$ & $\approx 0$ & $\approx -0.25$ \\
Bundles (total) & $14\times10^6$ & $68.2\times10^6$ & $68.2\times10^6$ \\
Split strategy & random 80/10/10 & $z$-region (contiguous) & slice-level (permuted) \\
\bottomrule
\end{tabular}
\end{table}

\textbf{RND} draws $(x_1,x_2,x_3)$ i.i.d.\ (fixed seed~42) from a
three-component Beta mixture on $[0,9.2]$ with density
\[
  p(x)=0.4\,\mathrm{Be}(2,4)+0.3\,\mathrm{Be}(4,4)+0.3\,\mathrm{Be}(6,2).
\]
The independence is physically unrealistic but
gives clean algorithmic control: intra- and inter-bundle correlations
are identically zero, so there is no joint prior to learn and the
network can only match the per-bundle Fisher content.

\textbf{SGS} extracts bundles from an analytical chest phantom
(geometric lungs, heart, ribs, spine) with exact line integrals.
Adjacent paths are essentially uncorrelated; the prior information lives
almost entirely in a concentrated bundle-mean distribution
($\sigma_{\bar{x}}\approx0.44$ versus $\approx2.5$ for i.i.d.),
encoding smooth but geometrically simple anatomy. Contiguous $z$-region
splitting is essential: a randomized shuffle leaked $z$-correlated
bundles across splits and collapsed the train/val gap to $0.3\%$,
whereas the region split restores a healthy $9\%$.

\textbf{PIS} extracts bundles from TCT fan-beam projections of a real
patient CT volume ($256\times256\times200$). \emph{All PIS bundles come
from a single patient}. The $70/15/15$ split assigns whole slices to
train/val/test by a random permutation (seed~42); because adjacent
slices remain anatomically correlated, this removes the within-slice
leakage of a naive bundle shuffle but does \emph{not} hold out
contiguous $z$-regions as the SGS split does, so the PIS sub-floor
figures should be read as an \emph{optimistic} within-anatomy bound
rather than a validated generalization result. The PIS results
characterize the informativeness of \emph{one} patient's anatomical
prior --- any sub-floor performance is interpolation within that
anatomy --- with multi-patient generalization the subject of a companion
paper that acquires a patient corpus. PIS shows real intra-bundle
correlation ($\rho(x_1,x_2)\approx-0.25$) and an L-shaped joint
distribution (concentrated in air and lung, with a bone and vascular
tail); its bundle-mean std ($\approx0.78$) lies between RND and SGS, an
ordering that reflects geometric simplicity rather than realism (PIS is
clinically closest).

The effective air-scan count is drawn per bundle from a TCM model,
$N_{0,b} = \mathrm{clip}(K_b\, e^{\bar{x}_b},\, 7.5\times10^4,\,
3\times10^5)$ with $K_b\sim\mathrm{LogUniform}(1397,5586)$, and the empirical
$\sigfair$ used in all dose comparisons is computed from these
per-bundle $N_{0,b}$ on the test split. Within a bundle this single
$N_{0,b}$ is shared by all of its rays: the model modulates flux from
bundle to bundle but holds it uniform across the bundle's three line
integrals. Relaxing this uniform-within-bundle flux --- co-designing the
source-flux profile through bow-tie filtration and tube-current
modulation --- is a substantial dose lever pursued in the companion
paper.

\subsection{Loss and training}
\label{subsec:loss}

Training minimizes a three-term loss
$\mathcal{L} = \mathcal{L}_{\mathrm{Huber}}
 + 0.30\,\mathcal{L}_{\mathrm{log}}
 + 0.05\,\mathcal{L}_{\mathrm{Poisson}}$. The Huber term (on
$\hat{x}_k - x_k$) is the primary supervision, robust to photon-starved
outliers; the log-domain term equalizes relative error across the full
range; and the Poisson-deviance term --- the exact bundle negative
log-likelihood in intensity space --- keeps the gradient informative at
high attenuation, where the MSE gradient on $x_k$ vanishes with
$e^{-x_k}$. A five-epoch Huber-only warmup protects the zero-initialized
head before the deviance term is enabled. Networks are trained with
AdamW \citep{loshchilov2019} under cosine-annealed warm restarts
\citep{loshchilov2017}, batch size 2048, for 80 epochs (best validation
checkpoint near epoch~59), on a single GTX~1080; full hyperparameters
are in the supplementary material.

\section{Results}
\label{sec:results}
Results follow the RND~$\to$~SGS~$\to$~PIS progression of
Section~\ref{subsec:data}, which is also a progression in how much joint
prior the network can exploit. All values are pooled per-bin error
standard deviations on held-out test splits; full per-path tables are in
the supplementary material, and Figs~\ref{fig:nn}
and~\ref{fig:priorquality} carry the per-bin values.

\subsection{RND: the algorithmic baseline}
\label{subsec:uc0}

On i.i.d.\ data the network has no joint prior to learn, so it can only
realize the per-bundle Fisher content --- and it does. Against the
linear (SVD) inverse it lowers the pooled error by $2.2$--$3.5\times$ at
bins~6--9, where the unweighted SVD inverse degrades sharply under photon
starvation (Section~\ref{subsec:SVD}).
Against the equal-dose floor, NN/$\sigfair$ stays above the endpoint
penalty $r_1 = \sqrt{7/3} \approx 1.53$ at every bin (peaking near $3.9$
at bins~6--7, where the harder middle path dominates the pooled figure):
with $p(x_1,x_2,x_3)$ independent there is no prior to carry the
estimate below the per-path CRB, exactly as theory requires. The
substantive comparison is against SNN1. On the endpoints the two agree
to within a few percent (the NN marginally ahead, $2$--$6\%$, by
implicitly using $x_2$ to sharpen the endpoints). On the middle path
they diverge with attenuation: at bin~5, where $x_2$ is still well
measured, SNN1 is $14\%$ better --- the learned prior adds a small bias
where the data are informative, the only bin and path on which SNN1
wins --- but above bin~5, as the $\sqrt{13/3}$ penalty collapses the
middle-path Fisher information, the NN closes the widening SNN1 gap by
$31$--$47\%$ at bins~6--9 through its learned joint prior. A V1/V2
ablation is essentially a tie (best-epoch losses equal to within
$0.1\%$, per-bin ratios within $[0.99,1.04]$): the SVD warm-start
carries no information the network has not already extracted from the
raw measurements --- evidence that a sufficiently expressive network
learns the nonlinear inverse implicitly.

\subsection{SGS: a concentrated but unrealistic prior}
\label{subsec:uc3}

The phantom dataset adds a prior, but the wrong kind. NN/$\sigfair$
stays above~1 at every bin (Fig.~\ref{fig:priorquality}): the network
beats SNN1 at high attenuation (bins~6--9: $33$--$67\%$) but never
crosses the equal-dose floor. The reason is concentration without
anatomical realism --- the phantom's bundle-mean distribution is narrow
($\sigma_{\bar{x}}\approx0.44$) but encodes smooth geometric shapes
rather than the variable tissue structure of real anatomy, so the
learned prior is useful yet not genuinely informative below the
frequentist CRB. An architectural ablation sharpens the point: NN~V4,
which locks the endpoints to their single-source MLEs and learns only
$x_2$, \emph{beats} the full joint network V2 at bins~1--7 on SGS,
because the phantom prior supplies no useful endpoint information and
V2's learned endpoints are merely noisier than the analytic ones; the
supplementary material tabulates this V2/V4 reversal across both
datasets. The reversal is sharpest at low attenuation: the V2 curve in
Fig.~\ref{fig:priorquality} spikes to $6.9\times\sigfair$ at bin~2,
where V2's noisy phantom endpoints dominate and the small
low-attenuation $\sigfair$ magnifies the ratio, whereas the
analytic-endpoint V4 stays near $1.8\times$ at the same bin
(supplementary Table~S8). This is the exact reverse of the patient-data result below, and it shows that
whether the joint prior helps a given path depends entirely on what the
prior actually encodes.

\subsection{PIS: a single-anatomy prior, and interpolation within it}
\label{subsec:uc1}

On patient-derived bundles the picture changes sharply
(Fig.~\ref{fig:nn}). NN/$\sigfair$ reaches near-parity with the floor at
bin~6 ($1.028$), falls below it at bin~7 ($0.541$), and reaches $0.146$
at bin~9 (Table~\ref{tab:pis_main}). We are explicit about what this is and is not. All PIS bundles
come from one patient, so the network has learned that patient's joint
distribution $p(x_1,x_2,x_3)$, which at high attenuation is nearly
deterministic given the bundle mean; the sub-floor numbers are the
network \emph{interpolating} within that single anatomy's bundle
statistics --- a Bayesian-prior effect that places the estimate below
the frequentist CRB. It characterizes the informativeness of this
patient's prior, not a generalizable dose advantage, and it is a regime
simply inaccessible to any prior-free per-bundle estimator, whose error
cannot fall below $\sigfair \cdot r_1$. The middle-path behaviour makes
the mechanism visible: the NN's $x_2$ error is nearly flat across
bins~3--9 and actually \emph{decreases} above bin~5 --- the opposite of
the physics-dictated trend --- because the prior, not the measurement,
is doing the work; the per-path ratios confirm the crossing is
middle-path-only, with the endpoints held at $1.3$--$1.5\times\sigfair$
(Table~\ref{tab:pis_ratio_main}). The middle-path NN bias stays below
$6\times10^{-4}$ in magnitude across the sub-floor bins, so the RMSE
equals the StdDev to the precision shown: the crossing holds in RMSE as
well as in variance, and is not an artifact of shrinkage toward the
population mean. SNN1, by contrast, collapses on PIS: it hits its
iteration ceiling on every bundle and accumulates large positive bias
($+0.04$, $+0.26$, $+0.31$ at bins~6--8), a structural failure of
proportional apportionment on concentrated bundle distributions, so the
NN/SNN1 pooled ratio falls to $0.03$ at bin~7 and $0.009$ at bin~9. The
endpoints are consistent: the NN beats SNN1 from bin~3 onward
($15$--$25\%$), using the joint prior to carry endpoint information no
per-bundle estimator can access. Here the full joint network V2 wins at
every bin but~9, and the V4 ablation confirms why --- V4's $x_2$
estimate is unchanged whether its endpoints come from the single-source
MLE or from SNN1 (within $2\%$), so the learned anatomical prior, not
the endpoint inputs, drives middle-path accuracy almost entirely. V2 is
therefore the primary architecture.

\begin{table}[t]
\renewcommand{\arraystretch}{1.15}
\centering\small
\caption{PIS pooled per-bin StdDev ($1{,}022{,}976$ records; SNN1 on a
PIS-matched 100K reference). SNN1 hits its iteration ceiling on every
bundle (positive bias $+0.04,+0.26,+0.31$ at bins~6--8). \textbf{Bold}:
$\mathrm{NN}/\sigfair<1$, i.e.\ interpolation within one anatomy --- not
a dose claim. Bin~10 empty.}
\label{tab:pis_main}
\begin{tabular}{ccrrrrr}
\toprule
Bin & Range & SNN1 & $\sigfair$ & NN~V2 & NN~V4-B & NN/$\sigfair$ \\
\midrule
1 & $[0,1)$ & 0.00212 & 0.00176 & 0.00317 & 0.00261 & 1.796 \\
2 & $[1,2)$ & 0.00451 & 0.00326 & 0.00451 & 0.00536 & 1.383 \\
3 & $[2,3)$ & 0.00875 & 0.00551 & 0.00761 & 0.01147 & 1.382 \\
4 & $[3,4)$ & 0.01652 & 0.00917 & 0.01212 & 0.01723 & 1.322 \\
5 & $[4,5)$ & 0.05449 & 0.01643 & 0.01979 & 0.02369 & 1.205 \\
6 & $[5,6)$ & 0.36607 & 0.02665 & 0.02740 & 0.02750 & 1.028 \\
7 & $[6,7)$ & 0.82373 & 0.03954 & 0.02138 & 0.03703 & \textbf{0.541} \\
8 & $[7,8)$ & 1.00331 & 0.05408 & 0.01303 & 0.03685 & \textbf{0.241} \\
9 & $[8,9)$ & 1.10133 & 0.06937 & 0.01011 & 0.00864 & \textbf{0.146} \\
\bottomrule
\end{tabular}
\end{table}

\begin{table}[t]
\renewcommand{\arraystretch}{1.15}
\centering\footnotesize
\setlength{\tabcolsep}{4pt}
\caption{Per-path NN~V2/$\sigfair$ on the PIS test set (slice-level
split). Endpoints: average of $x_1,x_3$. \textbf{Bold}: middle-path
ratio $<1$. The sub-floor crossing is confined to the middle path; the
endpoints never cross. The middle-path NN bias is $\le 6\times10^{-4}$
in magnitude across the sub-floor bins, so RMSE equals StdDev to the
precision shown and the ratios are unchanged when formed from RMSE.}
\label{tab:pis_ratio_main}
\begin{tabular}{ccrrr}
\toprule
Bin & Range & Endpoints (avg) & Middle $x_2$ & $x_2$ bias \\
\midrule
1 & $[0,1)$ & 1.79 & --- & --- \\
2 & $[1,2)$ & 1.38 & --- & --- \\
3 & $[2,3)$ & 1.37 & 1.30 & $+0.0042$ \\
4 & $[3,4)$ & 1.34 & \textbf{0.96} & $+0.0003$ \\
5 & $[4,5)$ & 1.39 & \textbf{0.62} & $+0.0002$ \\
6 & $[5,6)$ & 1.47 & \textbf{0.41} & $+0.0006$ \\
7 & $[6,7)$ & 1.52 & \textbf{0.27} & $+0.0003$ \\
8 & $[7,8)$ & 1.48 & \textbf{0.18} & $-0.0001$ \\
9 & $[8,9)$ & --- & \textbf{0.15} & $-0.0006$ \\
\bottomrule
\end{tabular}
\end{table}

\begin{figure}[t]   
\centering
\begin{tikzpicture}
\begin{axis}[
  width=13cm, height=8.2cm, ymode=log, log basis y=10,
  xlabel={attenuation bin $k$\ \ ($\mu L\in[k{-}1,k)$)},
  ylabel={$\sqrt{\mathrm{DIF}}=\sigma/\sigma_{\mathrm{fair}}$ \ (pooled, achieved)},
  xmin=0.5, xmax=9.5, ymin=0.1, ymax=28,
  xtick={1,2,3,4,5,6,7,8,9}, ytick={0.2,0.5,1,2,5,10,20},
  yticklabels={0.2,0.5,1,2,5,10,20},
  grid=both, major grid style={gray!14}, minor grid style={gray!6},
  tick label style={font=\footnotesize}, label style={font=\footnotesize},
  clip=false ]

  \fill[green!55!black, opacity=0.08] (axis cs:0.5,0.1) rectangle (axis cs:9.5,1.0);
  \node[font=\scriptsize, green!40!black, anchor=west] at (axis cs:1.3,0.32)
     {NN beats equal-dose single-source CT};

  \addplot[black, thick, domain=0.5:9.5, forget plot]{1};
  \addplot[gray, densely dashed, domain=0.5:9.5, forget plot]{1.7321};
  \node[font=\scriptsize, anchor=south west] at (axis cs:6.55,1.06){single-source CT floor};
  \node[font=\scriptsize, gray!50!black, anchor=west] at (axis cs:5.4,2.05)
     {per-bundle CRB ($\sqrt3$, pooled)};

  \addplot[red!75!black, very thick, densely dashed, mark=triangle*, mark size=2.2pt]
     coordinates {(1,1.205)(2,1.383)(3,1.588)(4,1.802)(5,3.316)(6,13.736)(7,20.833)(8,18.552)(9,15.876)};
  \addplot[blue!70!black, very thick, mark=*, mark size=1.7pt]
     coordinates {(1,1.796)(2,1.383)(3,1.382)(4,1.322)(5,1.205)(6,1.028)(7,0.541)(8,0.241)(9,0.146)};

  \draw[red!70!black, thin] (axis cs:4.9,10.6) -- (axis cs:6,13.736);
  \node[font=\scriptsize, red!75!black, anchor=west] at (axis cs:1.0,11)
     {SNN1 (classical, ceiling-limited)};
  \node[font=\scriptsize, blue!70!black, anchor=north] at (axis cs:3.4,0.94)
     {NN V2 (learned prior)};
\end{axis}
\end{tikzpicture}
\caption{The learned-prior payoff on patient data (PIS, UC1; pooled
per-bin StdDev, ${\sim}10^6$ test records). The achieved dose inflation
$\sqrt{\mathrm{DIF}}=\sigma/\sigma_{\mathrm{fair}}$ is the same quantity
the Cram\'er--Rao analysis bounds from below, now \emph{achieved} by two
estimators. \textbf{SNN1} (red), the near-CRB classical estimator, runs
into its iteration ceiling on every PIS bundle and inflates to
$>20\times$ the single-source floor at high attenuation. \textbf{NN~V2}
(blue), trained on the single-patient sinogram distribution, instead
\emph{falls below} the equal-dose single-source CT floor from bin~7
onward ($0.541$ at bin~7, $0.146$ at bin~9): on this one anatomy the
concentrated joint prior carries information beyond the per-bundle
Fisher content, so the Bayesian estimator drops below both the
single-source floor and the frequentist per-bundle CRB ($\sqrt3$,
pooled). We read this sub-floor behaviour as interpolation within the
patient's bundle statistics --- a concentrated-prior effect, not a
generalizable dose advantage (the mismatched-prior failure in
Fig.~\ref{fig:priorquality} reinforces this). SNN1's high-attenuation
values are additionally ceiling-limited.}
\label{fig:nn}
\end{figure}
\subsection{Prior quality and the mismatch control}
\label{subsec:v2v4_uc5}

The three datasets resolve into a clean ordering on prior quality
(Fig.~\ref{fig:priorquality}): with no prior (RND) NN/$\sigfair$ holds
above the frequentist penalty; a concentrated but geometrically simple
prior (SGS) improves on the classical estimator yet cannot cross the
floor; only the realistic single-anatomy prior (PIS) crosses it, by
interpolation. The cross-evaluation closes the argument. A network
trained on SGS and applied to the PIS test set fails badly ---
$2$--$20\times$ SNN1 at bins~6--8, with large positive bias ($+0.6$ to
$+7.1$; Table~\ref{tab:mismatch_main}) --- because the concentrated phantom prior predicts
high-attenuation triples that do not occur in patient anatomy. That a
\emph{mismatched} concentrated prior is worse than none is the control
confirming that the PIS sub-floor result is prior-specific
interpolation rather than a recovered measurement capability; it also
indicates that for any future multi-patient estimator, prior
\emph{diversity} (spanning body habitus and pathology) will matter more
than prior concentration. Accordingly, a central focus of the companion
paper is a multi-patient corpus of more than 20 CT volumes assembled to
span that diversity.

\begin{table}[t]
\renewcommand{\arraystretch}{1.2}
\centering\footnotesize
\caption{Cross-dataset degradation on the PIS test set: pooled NN/SNN1
StdDev ratio at bins~6--8 for a matched (PIS-trained) versus a
mismatched (SGS-trained) prior. A mismatched concentrated prior is worse
than none.}
\label{tab:mismatch_main}
\begin{tabular}{ll}
\toprule
Training prior & Outcome on PIS (bins~6--8) \\
\midrule
PIS (matched)    & $0.04$--$0.33\times$ SNN1; prior works \\
SGS (mismatched) & $2$--$20\times$ SNN1; substantial OOD failure \\
\bottomrule
\end{tabular}
\end{table}

\begin{figure}[t]   
\centering
\begin{tikzpicture}
\begin{axis}[
  width=13cm, height=8.2cm, ymode=log, log basis y=10,
  xlabel={attenuation bin $k$\ \ ($\mu L\in[k{-}1,k)$)},
  ylabel={$\sqrt{\mathrm{DIF}}=\sigma/\sigma_{\mathrm{fair}}$ \ (NN~V2, achieved)},
  xmin=0.5, xmax=9.5, ymin=0.1, ymax=9,
  xtick={1,2,3,4,5,6,7,8,9}, ytick={0.2,0.5,1,2,5},
  yticklabels={0.2,0.5,1,2,5},
  grid=both, major grid style={gray!14}, minor grid style={gray!6},
  tick label style={font=\footnotesize}, label style={font=\footnotesize},
  clip=false ]

  \fill[green!55!black, opacity=0.08] (axis cs:0.5,0.1) rectangle (axis cs:9.5,1.0);
  \node[font=\scriptsize, green!40!black, anchor=west] at (axis cs:1.1,0.22)
     {below single-source CT floor};

  \addplot[black, thick, domain=0.5:9.5, forget plot]{1};
  \addplot[gray, densely dashed, domain=0.5:9.5, forget plot]{1.7321};
  \node[font=\scriptsize, gray!50!black, anchor=west] at (axis cs:5.97,1.42)
     {per-bundle CRB ($\sqrt3$)};
  \node[font=\scriptsize, anchor=north east] at (axis cs:9.4,0.97){single-source CT floor};

  \addplot[violet!70!black, very thick, mark=triangle*, mark size=2pt]
     coordinates {(1,2.64)(2,2.06)(3,1.95)(4,2.13)(5,2.92)(6,3.93)(7,3.80)(8,3.18)(9,2.64)};
  \addplot[orange!85!black, very thick, densely dashed, mark=square*, mark size=1.8pt]
     coordinates {(1,4.308)(2,6.862)(3,3.003)(4,2.545)(5,3.125)(6,4.516)(7,4.130)(8,3.225)(9,1.258)};
  \addplot[blue!70!black, very thick, mark=*, mark size=1.7pt]
     coordinates {(1,1.796)(2,1.383)(3,1.382)(4,1.322)(5,1.205)(6,1.028)(7,0.541)(8,0.241)(9,0.146)};

  \node[font=\scriptsize, orange!85!black, anchor=west] at (axis cs:2.15,6.9){SGS (phantom prior)};
  \draw[violet!70!black, thin] (axis cs:6.45,2.6) -- (axis cs:6.6,3.78);
  \node[font=\scriptsize, violet!70!black, anchor=west] at (axis cs:5.5,2.42){RND (no prior)};
  \draw[blue!70!black, thin] (axis cs:6.4,0.58) -- (axis cs:7.0,0.541);
  \node[font=\scriptsize, blue!70!black, anchor=east] at (axis cs:6.35,0.58){PIS (patient prior)};

  \node[font=\scriptsize, red!60!black, align=left, anchor=west] at (axis cs:5.0,5.9)
     {Prior mismatch (SGS$\to$PIS),\\ inflates to $11\times$ and beyond (off scale $\uparrow$)};
\end{axis}
\end{tikzpicture}
\caption{Prior quality sets the achievable dose floor. Achieved
$\sqrt{\mathrm{DIF}}=\sigma/\sigma_{\mathrm{fair}}$ for the same NN~V2
architecture trained and tested on each of the three datasets, ordered
by sinogram structure. \textbf{RND} (no prior; i.i.d.\ line integrals):
stays well above the single-source floor at every bin---without a joint
prior the per-bundle estimator cannot beat equal-dose single-source CT.
\textbf{SGS} (concentrated but geometrically simple phantom prior):
improves at high attenuation yet still never crosses below $1$---a
phantom prior is not anatomical information. \textbf{PIS} (single patient
anatomy): a concentrated prior rich enough to drive
$\sqrt{\mathrm{DIF}}$ below the floor at high attenuation --- but, being
single-patient, this reflects interpolation within that one anatomy
rather than a generalizable dose gain. The cross-evaluation
(SGS-trained network applied to PIS) instead inflates to $11$--$391\times$
with large positive bias, showing the effect is prior-specific: a
\emph{mismatched} prior is worse than none. The lesson for any future
deployment is that prior \emph{diversity} (many patients), not prior
concentration, is what a genuine dose claim would require.}
\label{fig:priorquality}
\end{figure}

\section{Inter-bundle structure and the companion problem}
\label{sec:interbundle}
Paper~1 inverts each bundle in isolation, and in that setting the
per-bundle CRBs of Section~\ref{sec:crb} are the correct benchmark. But
the bundles are not independent. Because neighbouring detector readings
sample almost the same anatomy, the line integrals of adjacent bundles
are strongly correlated, and that correlation persists over surprisingly
large separations.

The same-path correlation of the middle path is $\rho \approx 0.999$
for immediately adjacent bundles and decays only slowly with
separation, in three geometrically distinct directions
(Fig.~\ref{fig:interbundle}): channel-to-channel within a projection
view, row-to-row across the multi-row detector ($\rho$ still
$\approx 0.71$ at a 127-row separation), and view-to-view between
adjacent gantry angles. A second, subtler coupling also appears: the
intra-bundle endpoint anti-correlation $\rho(x_1,x_3)\approx-0.5$ in
patient data propagates to adjacent bundles rather than arising from
any geometric path overlap.

This redundancy is exactly what a per-bundle estimator cannot use, and
it is the natural route past the single-anatomy interpolation limit of
Section~\ref{sec:results}: an estimator that couples bundles along all
three directions would draw on a far larger and more diverse pool of
constraints than any single bundle provides, and the appropriate lower
bound then becomes the Bayesian CRB incorporating the sinogram prior
rather than the per-bundle CRB.

\begin{figure}[t]
\centering
\resizebox{\columnwidth}{!}{%
\begin{tikzpicture}
\begin{axis}[
  width=12cm, height=7.4cm, xmode=log, log basis x=10,
  xlabel={bundle separation $s$\ \ (channels / detector rows / views)},
  ylabel={same-path correlation $\rho(x_2)$ \ (PIS)},
  xmin=0.9, xmax=150, ymin=0.6, ymax=1.06,
  xtick={1,2,5,10,30,64,127}, xticklabels={1,2,5,10,30,64,127},
  ytick={0.6,0.7,0.8,0.9,1.0},
  grid=both, major grid style={gray!14}, minor grid style={gray!6},
  tick label style={font=\footnotesize}, label style={font=\footnotesize},
  clip=false ]

  \addplot[blue!70!black, very thick, mark=*, mark size=1.7pt]
     coordinates {(1,0.9995)(10,0.988)(30,0.950)(64,0.874)(100,0.763)(127,0.708)};
  \addplot[orange!85!black, very thick, densely dashed, mark=square*, mark size=1.9pt]
     coordinates {(1,0.9995)(30,0.716)};

  \node[font=\scriptsize, anchor=west] at (axis cs:1.25,1.035)
     {all three directions coincide at $s{=}1$: $\rho\approx0.9995$};

  \node[font=\scriptsize, blue!70!black, anchor=west] at (axis cs:44,0.95)
     {row-to-row ($z$)};
  \node[font=\scriptsize, orange!85!black, anchor=west] at (axis cs:32,0.70)
     {channel-to-channel};

  \node[font=\scriptsize, align=left, anchor=west] at (axis cs:1.1,0.69)
     {\textbf{Paper 1} inverts each bundle independently.\\
      \textbf{Paper 2} couples neighbouring bundles\\
      along all three directions.};
\end{axis}
\end{tikzpicture}}
\caption{The inter-bundle structure Paper~1 leaves unexploited (PIS
patient sinogram). Paper~1 inverts each $5\times3$ bundle in isolation,
yet neighbouring bundles are strongly correlated. Shown is the same-path
correlation of the middle path $x_2$ as a function of bundle
separation~$s$. Immediately adjacent bundles ($s{=}1$)---in the channel
direction (within a projection view), the row direction (across detector
rows, $z$), and the view direction (across gantry angles)---are all
$\rho\approx0.9995$. The decay with $s$ is plotted for the channel and
row directions; the view direction was characterised only at adjacent
views ($\rho\approx0.9995$), so it shares the $s{=}1$ value but is not
drawn as a separate decay curve. The correlation falls only slowly: the
row ($z$) signal stays above $0.7$ even at a 127-row separation, and the
faster-decaying channel direction still holds $\rho\approx0.72$ at 30
channels. This near-rank-deficient redundancy across the sinogram is the
basis of the companion paper's sinogram-domain constraint, which couples
bundles along all three directions rather than inverting each in
isolation.}
\label{fig:interbundle}
\end{figure}

These observations define the companion program, which attacks the dose
floor of Eq.~\eqref{eq:dif} on complementary fronts: generalizing the
learned prior to a multi-patient corpus (so the sub-floor behaviour
reported here can be tested for genuine generalization), coupling
neighbouring bundles through the inter-bundle correlations above,
optimizing bundle geometry through the Fisher efficiencies $\eta_k$, and
co-designing the source-flux profile. Paper~1 fixes the per-bundle floor
against which that program is measured.

\section{Discussion}
\label{sec:discussion}
\subsection{What the learned prior does --- and what it does not}
\label{subsec:what_nn_does}

The per-path comparison with SNN1 gives a clean account of the
network's role, worth stating precisely because the headline sub-floor
result is easy to over-read. On the endpoints, SNN1 already saturates
the per-path bundle CRB on i.i.d.\ data ($1.01$--$1.05\times$), and the
network improves on it only marginally there ($2$--$6\%$) by using the
mixed rows to tie the endpoints to $x_2$. On the middle path the
behaviour changes sharply at bin~5: below it, where $x_2$ is still well
measured, SNN1 sits near its CRB and the learned prior is a mild
liability ($14\%$ worse at bin~5 on RND); above it, the $\sqrt{13/3}$
penalty collapses the middle-path Fisher information and the prior takes
over, closing the SNN1 gap by $31$--$47\%$. The transition is the whole
story: a prior helps exactly where the measurement stops carrying
information, and not before.

On patient data this crossover goes further, and here we are careful
about its meaning. When a prior is incorporated, the achievable variance
can fall below the frequentist CRB; the formal lower bound is then the
Bayesian (van Trees) CRB \citep{trees1968},
\begin{equation}
  \sigma^{\mathrm{Bayes}}_k \;=\;
    \sqrt{\Bigl[\bigl(\mathbb{E}_{\xvec}[\mathbf{F}(\xvec)]
      + \mathbf{J}_{\mathrm{prior}}\bigr)^{-1}\Bigr]_{kk}}
    \;\le\; \sigCRB_k,
  \label{eq:BCRB}
\end{equation}
a \emph{lower} bound on the achievable per-component RMS error: it is the
diagonal of the matrix van Trees inequality
$\mathbb{E}[(\hat{\xvec}-\xvec)(\hat{\xvec}-\xvec)^{\top}]
\succeq (\mathbb{E}_{\xvec}[\mathbf{F}(\xvec)] + \mathbf{J}_{\mathrm{prior}})^{-1}$,
in which the prior Fisher information
$\mathbf{J}_{\mathrm{prior}} = -\mathbb{E}[\nabla^2\ln p(\xvec)]$ adds to
the measurement's. At high attenuation $F_{kk}$ is small, so the
prior term dominates and the bound can drop below $\sigfair$. The PIS
result --- middle-path error flat and even \emph{decreasing} with
attenuation, the opposite of the physics-dictated trend --- is the
visible signature of this regime: the patient's joint distribution is
nearly deterministic given the bundle mean, so the network reads $x_2$
off the prior rather than the starved measurement. Two cautions follow.
First, this is a property of the prior, not of TCT --- and not of the
neural architecture: a concentrated enough prior lowers the Bayesian
bound for any estimation problem, any estimator that encodes that prior
(standard Bayesian or MAP estimation among them \citep{stayman2000}) would attain the same
sub-floor effect, and the same network on i.i.d.\ data (no prior) never
crosses the floor. The network is one convenient way to encode
$p(\xvec)$, not the source of the gain.
Second, our single-patient evaluation does not establish that the
network \emph{attains} the Bayesian CRB --- only that its sub-floor
behaviour is consistent with this regime. The result quantifies how
informative one anatomy's prior is; it is not a dose claim. SNN1's
complementary collapse on PIS (limit-cycling, bias $+0.26$ at bin~7) is
the same point from the other side: a classical estimator with no prior
cannot enter this regime at all.

\subsection{Prior quality is the binding constraint}
\label{subsec:spectrum}

The three datasets isolate prior \emph{quality} as the variable that
matters (Fig.~\ref{fig:priorquality}). The improvement from RND to SGS
to PIS is not architectural --- it is the same network --- but a
progression in how much real structure the prior encodes: i.i.d.\ data
offers none, the phantom offers concentration without anatomy, and the
patient offers both; only the last crosses the floor. The V4/V2
reversal across SGS and PIS confirms the diagnosis: with an
uninformative endpoint prior (SGS) the analytically locked architecture
wins, whereas with an informative one (PIS) the full joint network does.
The cross-evaluation makes prior quality a safety issue as much as a
performance one: a mismatched prior does not merely underperform,
it confidently overestimates (bias up to $+7.1$), because it expects
high-attenuation triples the test anatomy does not contain. Such
failures are in principle detectable at inference --- the ratio of
measurement to prior Fisher information,
$F_{kk}/[\mathbf{J}_{\mathrm{prior}}]_{kk}$, flags bundles where the
prior has overridden the data --- and down-weighting in that regime is a
natural safeguard for a deployed estimator. The practical implication is
the one already stated: prior \emph{diversity}, supplied by a
multi-patient corpus, is the prerequisite for any generalizable claim.

\subsection{Implicit physics learning}
\label{subsec:convergence_result}

The V1/V2 ablation is small but clarifying: raw measurements alone (V1)
and measurements plus the SVD warm-start (V2) converge to
indistinguishable accuracy ($<0.1\%$). A sufficiently expressive network
learns the complete nonlinear inverse from the five raw counts and
$N_0$, so the explicit preprocessing can be dropped without loss --- the
``physics-motivated'' features accelerate training but are not required
at inference.

\subsection{Limitations and scope}
\label{subsec:limitations}

Several boundaries delimit these results. The analysis is monochromatic
and scatter-free; the two-level loss decomposition is independent of
spectral effects, so the qualitative structure should persist
polychromatically, but beam hardening and scatter are not modeled.
Polychromaticity also opens an opportunity unavailable to single-source
CT: simultaneous sources with \emph{distinct} spectral signatures
(different kVp, filtration, or photon-counting energy bins) would add
source-identifying information that partially counteracts the
exponential-summation loss --- a direction we pursue separately. The
most binding limitation is prior quality and scope: the sub-floor PIS
result is single-patient and characterizes one anatomy's prior, and a
properly regularized classical estimator would be a fairer
high-attenuation comparator than SNN1, whose iteration ceiling is itself
a structural artifact on concentrated distributions. All reported
metrics also use a single training seed (42): the large test sets make
the per-bin estimates statistically tight, but the network's
training-seed variability is not separately characterized. Finally, we defer
image-domain evaluation: propagating per-bundle errors through
reconstruction now would assess an incomplete system, since the
inter-bundle constraints and multi-patient priors of the companion
paper are expected to reduce middle-path error substantially. The
appropriate test --- prior-informed, flux-optimized TCT against
equal-dose single-source CT in the image domain --- is a milestone
downstream of that work.

\section{Conclusion}
\label{sec:conclusion}
We have studied the per-bundle inversion problem of Temporal CT ---
recovering three line-integral attenuations from five summed Poisson
intensities in the $5\times3$ geometry --- and established both its
statistical limits and the performance of practical estimators. Because
the forward model sums exponentials, performance separates into two
levels: a structural aggregation loss fixed by the incidence matrix,
and a reducible algorithmic inefficiency that better estimators can
close. The closed-form CRBs make the first level exact --- at equal
attenuation the Fisher information factorizes and the inflation ratios
are the constants $\sqrt{7/3}\approx1.53$ (endpoints) and
$\sqrt{13/3}\approx2.08$ (middle path), with bundle efficiencies
$\eta_k$ that serve as a figure of merit for any $M\times K$ multiplexed
geometry. Against this floor, SNN1 exhausts the per-bundle Fisher
information at the endpoints (within $1$--$2\%$ of CRB) but leaves the
characteristic $1.4$--$1.7\times$ middle-path gap, and a
physics-motivated residual network closes much of that gap through a
learned joint prior --- while the V1/V2 convergence shows the five raw
counts are the complete per-bundle observation, so the network learns
the nonlinear inverse directly from them without engineered features.

The network's behaviour across the RND~$\to$~SGS~$\to$~PIS datasets
traces a single axis: prior quality. With no prior (RND) it cannot cross
the equal-dose floor; with a concentrated but geometrically simple prior
(SGS) it improves on the classical estimator yet still cannot cross it;
only with a realistic single-anatomy prior (PIS) does the middle-path
noise fall below the floor at high attenuation. We interpret that
sub-floor result as interpolation within one patient's bundle
statistics --- a Bayesian-prior effect that places the estimate below
the frequentist CRB, consistent with the van Trees
bound~\eqref{eq:BCRB} --- and not as a generalizable dose advantage; the
failure of a mismatched prior under cross-distribution testing confirms
that the effect is prior-specific. Establishing whether it survives
across patients is precisely the task we defer.

Neither the structure analysed nor the methodology is specific to TCT: a
Poisson sum of exponentials, in which $\log$ and $\sum$ do not commute,
is the signature of any acquisition that aggregates attenuated photons
before counting, and the FIM-efficiency and dose-inflation accounting
transfer to such problems with only the incidence matrix respecified.
For TCT, these per-bundle limits are the baseline against which a
companion program will be measured, attacking the dose floor on
complementary fronts: generalizing the prior with a multi-patient
corpus, coupling neighbouring bundles through their strong inter-bundle
correlations, optimizing geometry through the efficiencies $\eta_k$, and
co-designing the source-flux profile through bow-tie filtration and
tube-current modulation. The dose-inflation factor is the figure of
merit they share. What this paper establishes is the floor itself: a
closed-form, geometry-determined account of the statistical cost of
multiplexed photon aggregation, against which any such acquisition must
be measured.

\section*{Acknowledgments}
During the preparation of this work the author used Claude (Anthropic,
Claude Opus~4 series) for \LaTeX{} drafting and editing, manuscript
restructuring, copy-editing, reference formatting, and a critical read of
the text for internal consistency, and for engineering support in the
computational work, including guided selection of neural-network training
hyperparameters against validation loss (the settings reported in the
supplementary material are standard optimizer defaults with a
conservative epoch budget). Generative AI was not used to create, alter,
or interpret any research data, figures, or quantitative results: all
simulations, the patient-derived sinogram data, the estimators, and every
reported value are the author's own, produced by author-written code. The
author reviewed and verified all AI-assisted output, made all scientific
decisions, and takes full responsibility for the content, accuracy,
integrity, and originality of the work.

\section*{Data availability statement}
The datasets generated and analysed in this study are available
upon request. The synthetic random-line-integral (RND)
data can additionally be regenerated directly from the generative model
and parameters specified in Section~\ref{subsec:data}. The
patient-image-derived (PIS) data are derived from a clinical CT volume
and are shared upon request, subject to the privacy and
institutional constraints governing patient imaging data. The PIS volume was used in fully de-identified form, contains no patient-identifiable information, and enters the analysis only as anonymized attenuation line integrals. Analysis code
(the SNN1 classical estimator and the physics-motivated residual
network) is available upon request.


\clearpage

\setcounter{section}{0}
\setcounter{subsection}{0}
\setcounter{table}{0}
\setcounter{figure}{0}
\setcounter{equation}{0}
\renewcommand{\thesection}{S\arabic{section}}
\renewcommand{\thetable}{S\arabic{table}}
\renewcommand{\theequation}{S\arabic{equation}}

\section*{Supplementary Material}
\noindent\textbf{Per-Bundle Statistical Limits and Learned-Prior
Inversion in Multiplexed X-ray Imaging, with Application to Temporal
CT}\\[0.4em]
\noindent\textit{This supplement provides the material referenced as ``supplementary material'' in the main paper: the Fisher-information limits beyond equal attenuation (S1), classical-baseline diagnostics (S2), full network training detail (S3), the complete per-path and per-bin result tables (S4), and the cross-dataset prior-mismatch analysis (S5). Section, equation, and table numbers in the main paper are referred to descriptively; all cross-references within this supplement are prefixed with ``S''. Datasets are denoted RND (i.i.d.\ synthetic), SGS (analytical phantom), and PIS (single-patient), as in the main text.}

\bigskip
\suppressfloats[t]
\section{Fisher information beyond equal attenuation}
\label{sec:fim}
The main paper derives the bundle Fisher information matrix
$\mathbf{F}$, its six independent entries $F_{11},\dots,F_{23}$, and the
equal-attenuation factorization $\mathbf{F}(x)=N_0e^{-x}\mathbf{M}$ that
yields the constant inflation ratios $r_1=r_3=\sqrt{7/3}\approx1.528$
and $r_2=\sqrt{13/3}\approx2.082$. Two extensions are collected here:
the behaviour of the ratios away from equal attenuation, and a per-bin
numerical confirmation of the two-level gap decomposition.

\subsection{Asymptotic inflation-ratio regimes}
Away from the equal-attenuation family the ratios
$r_k=\sigCRB_k/\sigfair(x_k)$ vary, and the limiting regimes are
informative (Table~\ref{tab:inflation_limits}). Both limits follow from
the Schur complement
$[\mathbf{F}^{-1}]_{11}=\bigl(F_{11}-[F_{12},F_{13}]\,\mathbf{C}^{-1}
[F_{12},F_{13}]^{\top}\bigr)^{-1}$, with $\mathbf{C}$ the lower-right
$2\times2$ block of $\mathbf{F}$.

\begin{table}[t]
\renewcommand{\arraystretch}{1.2}
\centering\small
\caption{Fair-comparison inflation ratios
$r_k=\sigCRB_k/\sigfair(x_k)$ across key attenuation regimes. The
endpoint ratio saturates at its dedicated-row value $\sqrt3$ when a
single endpoint darkens; both endpoints darkening drives $x_2$ toward
single-source efficiency ($r_2\to1$).}
\label{tab:inflation_limits}
\begin{tabular}{p{4.6cm}ccc}
\toprule
Regime & $r_1$ & $r_2$ & $r_3$ \\
\midrule
Equal attenuation (any $x$)            & $\sqrt{7/3}$ & $\sqrt{13/3}$ & $\sqrt{7/3}$ \\
$x_1\!\to\!\infty$, others fixed       & $\to\sqrt{3}$ & $\downarrow$ & $\sqrt{7/3}$ \\
$x_1,x_3\!\to\!\infty$, $x_2$ fixed    & $\to\infty$ & $\to 1$ & $\to\infty$ \\
\bottomrule
\end{tabular}
\end{table}

When $x_1\to\infty$ with $x_2,x_3$ fixed, the mixed rows are dominated
by the bright paths and carry negligible information about $x_1$; only
the dedicated single-source row~1 contributes, so
$\sigCRB_1\to\sigma^{\mathrm{single}}(x_1)$ and $r_1\to\sqrt{3}$. When
both endpoints darken ($\alpha_1,\alpha_3\to0$), sources~1 and~3 emit no
photons and rows~2,3,4 become independent Poisson observations of
$x_2$, so the middle-path efficiency $\eta_2\to1$ and $r_2\to1$ --- the
aggregation penalty vanishes precisely when the competing paths stop
contributing photons.

\subsection{Per-bin two-level gap decomposition}
Table~\ref{tab:decomposition} evaluates the main paper's two-level
decomposition at equal-attenuation bin centres for an analytical
reference flux $N_0=122{,}000$. The endpoint ratio $r_1=\sqrt{7/3}$ is
constant to five figures across all bins, confirming numerically that
the physics penalty is attenuation-independent; the remaining column
(Blk~3) is the analytical standard deviation an estimator at the bound
would carry once the residual algorithmic block is included.

\begin{table}[t]
\renewcommand{\arraystretch}{1.15}
\centering\footnotesize
\setlength{\tabcolsep}{4pt}
\caption{Two-level gap decomposition at equal-attenuation bin centres
(analytical reference, $N_0=122{,}000$). $r_1=\sqrt{7/3}\approx1.528$ is
exactly constant, confirming attenuation-independence of the physics
penalty.}
\label{tab:decomposition}
\begin{tabular}{ccccccc}
\toprule
Bin & Range & $x_c$ & $\sigfair$ & $\sigCRB_1$ & $r_1$ & Blk~3 \\
\midrule
1 & $[0,1)$ & 0.50 & 0.00365 & 0.00558 & 1.528 & 0.00631 \\
2 & $[1,2)$ & 1.50 & 0.00602 & 0.00919 & 1.528 & 0.01042 \\
3 & $[2,3)$ & 2.50 & 0.00992 & 0.01516 & 1.528 & 0.01718 \\
4 & $[3,4)$ & 3.50 & 0.01636 & 0.02499 & 1.528 & 0.02832 \\
5 & $[4,5)$ & 4.50 & 0.02697 & 0.04122 & 1.528 & 0.04669 \\
6 & $[5,6)$ & 5.50 & 0.04448 & 0.06795 & 1.528 & 0.07700 \\
7 & $[6,7)$ & 6.50 & 0.07334 & 0.11206 & 1.528 & 0.12695 \\
8 & $[7,8)$ & 7.50 & 0.12093 & 0.18480 & 1.528 & 0.20938 \\
9 & $[8,9)$ & 8.50 & 0.19940 & 0.30471 & 1.528 & 0.34539 \\
\bottomrule
\end{tabular}
\end{table}

\section{Classical baselines: SVD diagnostics and SNN1}
\label{sec:classical-supp}
The main paper establishes why the linear (SVD) inverse is only a
suboptimal baseline --- it is near-optimal on the dedicated endpoint
rows, but an unweighted least-squares inverse is not Poisson-efficient,
the middle path inherits the structural bright-path dilution, and the
$-\ln$ of a photon-starved, possibly non-positive transmission is biased
at high attenuation, none of which any regularization of the
well-conditioned $\Amat$ repairs. We do not repeat that diagnostic here;
instead we tabulate the SNN1 classical baseline against which the learned
network is measured.

\subsection{SNN1 on RND}
Table~\ref{tab:snn1_rnd} reports SNN1 per-path StdDev on $10^5$ i.i.d.\
test bundles. On the endpoints SNN1 sits within $1$--$2\%$ of the bundle
CRB across bins~4--9, essentially the optimum for an unbiased classical
estimator. On the middle path it tracks the CRB at moderate attenuation
($1.03\times$ at bin~4) but opens a $1.4$--$1.7\times$ gap at bins~6--7
under the $\sqrt{13/3}$ penalty; at bins~8--9 the box constraint biases
the estimate below the unconstrained CRB, so the ratio there is no
longer a fair efficiency measure (marked${}^{*}$).

\begin{table}[t]
\renewcommand{\arraystretch}{1.15}
\centering\footnotesize
\setlength{\tabcolsep}{4pt}
\caption{SNN1 per-path StdDev on $10^5$ RND test bundles.
${}^{*}$SNN1 falls below the unconstrained CRB at bins~8--9 for $x_2$
through box-constraint regularization (biased).}
\label{tab:snn1_rnd}
\begin{tabular}{ccrrcrr}
\toprule
Bin & Range & $\sigCRB_{\mathrm{ep}}$ & SNN1 $x_1/x_3$ & (ratio)
    & $\sigCRB_{x_2}$ & SNN1 $x_2$ \\
\midrule
4 & $[3,4)$ & 0.0136 & 0.0143 & (1.05) & 0.0214 & 0.0220 \\
5 & $[4,5)$ & 0.0217 & 0.0224 & (1.03) & 0.0487 & 0.0510 \\
6 & $[5,6)$ & 0.0344 & 0.0351 & (1.02) & 0.1171 & 0.1992 \\
7 & $[6,7)$ & 0.0540 & 0.0547 & (1.01) & 0.2958 & 0.4127 \\
8 & $[7,8)$ & 0.0848 & 0.0863 & (1.02) & 0.7298 & $0.5410^{*}$ \\
9 & $[8,9)$ & 0.1244 & 0.1270 & (1.02) & 1.4384 & $0.5981^{*}$ \\
\bottomrule
\end{tabular}
\end{table}

\subsection{SNN1 on SGS}
On the analytical phantom SNN1 requires $N_{\mathrm{iter}}=200$ to
converge (Table~\ref{tab:snn1_sgs}; mean iterations $150$--$199$ across
bins), because the concentrated bundle-mean distribution
($\sigma_{\bar{x}}\approx0.44$) supplies little Phase-B gradient. This is
the converged reference used in the SGS comparisons; on PIS, by
contrast, SNN1 limit-cycles at its iteration ceiling on every bundle
(main paper).

\begin{table}[t]
\renewcommand{\arraystretch}{1.15}
\centering\footnotesize
\setlength{\tabcolsep}{4pt}
\caption{SGS SNN1 baseline: pooled per-path StdDev on the full
$6.8\times10^6$-bundle test set ($N_{\mathrm{iter}}=200$). Bias
$<0.013$ at bins~1--8. Bin~8 pooled exceeds bin~9 (constrained-estimator
effect); bin~10 empty.}
\label{tab:snn1_sgs}
\begin{tabular}{ccrrr}
\toprule
Bin & Range & $N$ paths & SNN1 (pooled) & Mean iter \\
\midrule
1 & $[0,1)$ & 3{,}624{,}893 & 0.00141 & 166.6 \\
2 & $[1,2)$ &   592{,}428   & 0.00452 & 194.1 \\
3 & $[2,3)$ &   901{,}332   & 0.00753 & 198.7 \\
4 & $[3,4)$ & 1{,}654{,}313 & 0.01488 & 199.3 \\
5 & $[4,5)$ & 3{,}017{,}283 & 0.03006 & 196.1 \\
6 & $[5,6)$ & 3{,}959{,}288 & 0.08912 & 179.0 \\
7 & $[6,7)$ & 4{,}590{,}485 & 0.21809 & 175.2 \\
8 & $[7,8)$ & 2{,}049{,}507 & 0.46434 & 179.7 \\
9 & $[8,9)$ &    69{,}991   & 0.37703 & 151.4 \\
\bottomrule
\end{tabular}
\end{table}

\section{Network: input variants and training detail}
\label{sec:training}
\subsection{Input variants}
Both variants share the five normalized intensities $I_j/N_0$ and
$\log_{10}N_0$ as the raw measurement block, and the ten FIM-motivated
features of the main paper; they differ only in whether the SVD
warm-start is supplied (Table~\ref{tab:variants}).

\begin{table}[t]
\renewcommand{\arraystretch}{1.15}
\centering\small
\caption{Network input variants. All variants share the raw measurement
block ($n_{\mathrm{raw}}=6$) and the ten physics features; effective
input dimensions are 16 (V1) and 19 (V2).}
\label{tab:variants}
\begin{tabular}{clcl}
\toprule
Variant & Additional inputs & $n_{\mathrm{in}}$ & Role \\
\midrule
V1 & none & 16 & raw measurements alone \\
V2 & $\hat{\xvec}_{\mathrm{SVD}}$ & 19 & SVD warm-start \\
\bottomrule
\end{tabular}
\end{table}

\subsection{Loss function}
Training minimizes a three-term loss,
\begin{equation}
\mathcal{L} = w_1\mathcal{L}_{\mathrm{Huber}}
            + w_2\mathcal{L}_{\mathrm{log}}
            + w_3\mathcal{L}_{\mathrm{Poisson}},
\quad (w_1,w_2,w_3)=(1.0,\,0.30,\,0.05).
\label{eq:loss}
\end{equation}
The \emph{Huber} term (transition $\delta=1.0$) on
$\hat{x}_k-x_k^{\mathrm{true}}$ is the primary supervision, preferred
over plain MSE so that photon-starved outliers do not dominate the
gradient. The \emph{Poisson deviance}
\begin{equation}
\mathcal{L}_{\mathrm{Poisson}}=\frac{1}{5}\sum_{j=1}^{5}
  \Bigl[\hat{N}_j-\Imeas[j]+\Imeas[j]\ln\tfrac{\Imeas[j]}{\hat{N}_j}\Bigr],
\quad \hat{N}_j=\Ipred[j](\hat{\xvec}),
\label{eq:poisson}
\end{equation}
is the exact Poisson negative log-likelihood up to a constant; its
gradient acts in intensity space and stays informative at high
attenuation, where the MSE gradient on $x_k$ vanishes with $e^{-x_k}$.
The \emph{log-domain} term
\begin{equation}
\mathcal{L}_{\mathrm{log}}=\frac{1}{3}\sum_{k=1}^{3}
  \bigl(\ln(\hat{x}_k+1)-\ln(x_k^{\mathrm{true}}+1)\bigr)^2
\label{eq:logloss}
\end{equation}
equalizes relative error across the full $[0,9.2]$ range. A five-epoch
warmup uses $\mathcal{L}_{\mathrm{Huber}}$ alone before the deviance term
is enabled, protecting the zero-initialized output head from the large
early-epoch deviance.

\subsection{Optimizer, schedule, and hardware}
Networks are trained with AdamW \citep{loshchilov2019}
($\eta=3\times10^{-4}$, weight decay $10^{-5}$, gradient clipping at
global norm $1.0$) under cosine annealing with warm restarts (SGDR
\citep{loshchilov2017}; $T_0=20$, $T_{\mathrm{mult}}=2$, oscillating
between $\eta=3\times10^{-4}$ and $\eta_{\min}=3\times10^{-6}$). Batch
size is $2048$ and training runs for $80$ epochs, with the best
validation-loss checkpoint saved per variant (typically near epoch~59;
a mild overfitting trend of $\lesssim1.6\%$ validation-loss increase
appears in the final 20 epochs). All experiments use a single NVIDIA
GTX~1080 (8\,GB) with PyTorch 2.3.1, at roughly one hour per epoch.
These settings were obtained by guided manual tuning against validation
loss rather than a systematic grid or random search; the learning rate
and weight decay are standard AdamW defaults, and the 80-epoch budget is
deliberately conservative, as validation loss reaches within
$\sim\!0.5\%$ of its final value by epoch~20 (the first warm restart).

\section{Full per-path and per-bin results}
\label{sec:results-supp}
This section collects the complete per-bin and per-path numbers that the
main paper's figures summarize. Throughout, $\sqrt{\mathrm{DIF}}=
\sigma/\sigfair$ is computed from the per-bundle $N_0$ on each test
split. As in the main paper, any sub-floor value on PIS
($\mathrm{NN}/\sigfair<1$) reflects interpolation within a single
anatomy --- a concentrated-prior (Bayesian) effect --- and not a
generalizable dose advantage.

\subsection{Numerical confirmation of the inflation ratios}
Table~\ref{tab:crb_equal} evaluates the bundle CRBs at equal attenuation
across the full range at $N_0=10^5$; the inflation ratios
$r_1=\sqrt{7/3}$ and $r_2=\sqrt{13/3}$ are constant to the figures shown,
the empirical counterpart of the factorization in S1.

\begin{table}[t]
\renewcommand{\arraystretch}{1.15}
\centering\footnotesize
\setlength{\tabcolsep}{4pt}
\caption{Bundle CRBs at equal attenuation $x_1=x_2=x_3=x$, $N_0=10^5$.
$N$ is the transmitted count $N_0e^{-x}$. The inflation ratios are
exactly constant across the full attenuation range.}
\label{tab:crb_equal}
\begin{tabular}{rrrrrrr}
\toprule
$x$ & $N$ & $\sigfair$ & $\sigCRB_1$ & $\sigCRB_2$ & $r_1$ & $r_2$ \\
\midrule
0.0 & 100{,}000 & 0.00183 & 0.00279 & 0.00380 & 1.528 & 2.082 \\
2.0 &  13{,}534 & 0.00497 & 0.00759 & 0.01034 & 1.528 & 2.082 \\
4.0 &   1{,}832 & 0.01350 & 0.02062 & 0.02809 & 1.528 & 2.082 \\
6.0 &      248  & 0.03669 & 0.05603 & 0.07635 & 1.528 & 2.082 \\
8.0 &       34  & 0.09977 & 0.15232 & 0.20759 & 1.528 & 2.082 \\
9.2 &       10  & 0.17326 & 0.26472 & 0.36074 & 1.528 & 2.082 \\
\bottomrule
\end{tabular}
\end{table}

\subsection{Pooled per-bin StdDev}
Tables~\ref{tab:rnd_pooled} and~\ref{tab:sgs_pooled} give the pooled
per-bin StdDev behind Figs.~4--5 of the main paper for RND and SGS; the
corresponding PIS pooled table is in the main-paper Results.

\begin{table}[t]
\renewcommand{\arraystretch}{1.15}
\centering\footnotesize
\setlength{\tabcolsep}{3.5pt}
\caption{RND pooled per-bin StdDev ($1.4\times10^6$ records). NN/$\sigfair$
stays above the endpoint penalty $r_1\approx1.53$ at every bin: with no
joint prior the estimator is bounded by the per-path CRB.}
\label{tab:rnd_pooled}
\begin{tabular}{ccrrrrr}
\toprule
Bin & Range & $\sigCRB_1$ & SVD & $\sigfair$ & NN~V2 & NN/$\sigfair$ \\
\midrule
1 & $[0,1)$ & --- & 0.00458 & 0.00190 & 0.00501 & 2.64 \\
2 & $[1,2)$ & --- & 0.00642 & 0.00297 & 0.00611 & 2.06 \\
3 & $[2,3)$ & --- & 0.01144 & 0.00485 & 0.00946 & 1.95 \\
4 & $[3,4)$ & --- & 0.02486 & 0.00796 & 0.01695 & 2.13 \\
5 & $[4,5)$ & --- & 0.06140 & 0.01306 & 0.03813 & 2.92 \\
6 & $[5,6)$ & 0.02811 & 0.18867 & 0.02149 & 0.08435 & 3.93 \\
7 & $[6,7)$ & 0.04626 & 0.44521 & 0.03537 & 0.13443 & 3.80 \\
8 & $[7,8)$ & 0.07563 & 0.64202 & 0.05812 & 0.18501 & 3.18 \\
9 & $[8,9)$ & 0.11094 & 0.70231 & 0.09258 & 0.24473 & 2.64 \\
\bottomrule
\end{tabular}
\end{table}

\begin{table}[t]
\renewcommand{\arraystretch}{1.15}
\centering\footnotesize
\setlength{\tabcolsep}{3pt}
\caption{SGS pooled per-bin StdDev ($6.8\times10^6$-bundle test set).
NN/$\sigfair>1$ at every bin: the concentrated phantom prior improves on
SNN1 at high attenuation but never crosses the floor. V4 (locked
endpoints) wins bins~1--7; V2 wins bins~8--9. Bin~10 empty.}
\label{tab:sgs_pooled}
\begin{tabular}{ccrrrrr}
\toprule
Bin & Range & SNN1 & $\sigfair$ & NN~V2 & V2/$\sigfair$ & V4/$\sigfair$ \\
\midrule
1 & $[0,1)$ & 0.00141 & 0.00159 & 0.00683 & 4.308 & 1.241 \\
2 & $[1,2)$ & 0.00452 & 0.00338 & 0.02318 & 6.862 & 1.752 \\
3 & $[2,3)$ & 0.00753 & 0.00575 & 0.01726 & 3.003 & 1.746 \\
4 & $[3,4)$ & 0.01488 & 0.00937 & 0.02385 & 2.545 & 1.986 \\
5 & $[4,5)$ & 0.03006 & 0.01464 & 0.04576 & 3.125 & 2.778 \\
6 & $[5,6)$ & 0.08912 & 0.02481 & 0.11206 & 4.516 & 4.297 \\
7 & $[6,7)$ & 0.21809 & 0.03943 & 0.16281 & 4.130 & 4.068 \\
8 & $[7,8)$ & 0.46434 & 0.06174 & 0.19907 & 3.225 & 3.392 \\
9 & $[8,9)$ & 0.37703 & 0.09671 & 0.12168 & 1.258 & 1.340 \\
\bottomrule
\end{tabular}
\end{table}

\subsection{Per-path detail on PIS}
The per-path breakdown (Table~\ref{tab:pis_perpath}) makes the mechanism
explicit: the middle-path NN StdDev is flat and even \emph{decreasing}
with attenuation ($0.013\to0.010$ across bins~3--9), the opposite of the
physics-dictated trend, while SNN1 diverges as its apportionment
collapses. Expressed against the floor (per-path NN/$\sigfair$ table in
the main-paper Results), the
sub-floor behaviour is confined to the middle path --- it crosses below
$\sigfair$ at bin~4 and reaches $0.15$ at bin~9 --- whereas the
endpoints, already near their CRB classically, stay at
$1.34$--$1.79\times\sigfair$ and never cross.

\begin{table}[t]
\renewcommand{\arraystretch}{1.15}
\centering\footnotesize
\setlength{\tabcolsep}{4pt}
\caption{PIS per-path StdDev, NN~V2 vs.\ PIS-matched SNN1 (100K
bundles). The middle path ($x_2$) is flat/decreasing for the network and
divergent for SNN1.}
\label{tab:pis_perpath}
\begin{tabular}{cc rr rr rr}
\toprule
 & & \multicolumn{2}{c}{$x_1$} & \multicolumn{2}{c}{$x_2$} & \multicolumn{2}{c}{$x_3$} \\
\cmidrule(lr){3-4}\cmidrule(lr){5-6}\cmidrule(lr){7-8}
Bin & Range & NN & SNN1 & NN & SNN1 & NN & SNN1 \\
\midrule
3 & $[2,3)$ & 0.0074 & 0.0089 & 0.0131 & 0.0173 & 0.0076 & 0.0085 \\
4 & $[3,4)$ & 0.0118 & 0.0150 & 0.0130 & 0.0294 & 0.0117 & 0.0148 \\
5 & $[4,5)$ & 0.0213 & 0.0280 & 0.0123 & 0.1006 & 0.0214 & 0.0286 \\
6 & $[5,6)$ & 0.0373 & 0.0448 & 0.0115 & 0.5090 & 0.0369 & 0.0452 \\
7 & $[6,7)$ & 0.0572 & 0.0667 & 0.0106 & 0.8668 & 0.0577 & 0.0674 \\
8 & $[7,8)$ & 0.0760 & 0.0892 & 0.0096 & 1.009 & 0.0910 & 0.1031 \\
9 & $[8,9)$ & ---    & ---    & 0.0101 & 1.101 & ---    & ---    \\
\bottomrule
\end{tabular}
\end{table}

\section{Prior mismatch: cross-dataset degradation}
\label{sec:mismatch}
\subsection{V2 versus V4 by prior richness}
The competition between the full joint network (V2) and the
endpoint-locked ablation (V4) resolves as a function of prior richness.
On PIS (rich anatomical prior) V2 wins at every bin except~9: the joint
prior benefits the endpoints and the middle path simultaneously. On SGS
(concentrated but geometrically simple) V4 wins bins~1--7: the phantom
prior encodes no useful endpoint information, so V4's analytically
locked endpoints beat V2's noisier learned ones (V4 bias $\leq0.018$
throughout; V2 shows $-0.094$ at bin~8). On both datasets V4's $x_2$
estimate is nearly independent of endpoint quality (the two endpoint
variants of V4 agree to within $2\%$), confirming that middle-path
accuracy is driven by the learned prior rather than the endpoint inputs.

\subsection{Cross-dataset prior mismatch}
To quantify the risk of an inappropriate prior, the SGS-trained network
is evaluated on the PIS test set. The result is a substantial
out-of-distribution failure: pooled $\mathrm{NN}/\sigfair$ reaches
$11$--$391$ across bins, $2$--$20\times$ the SNN1 baseline at
bins~6--8, with large positive bias ($+0.6$ to $+7.1$; see the cross-dataset table
in the main-paper Results). The network has learned
``$x_2\approx\mathrm{high}$'' from the concentrated phantom distribution
and applies it to PIS bundles passing through air and lung, where it is
badly wrong.

This is the control behind the moderated reading of the PIS result in
the main paper. A concentrated prior drives the middle-path variance
below the floor only when it \emph{matches} the test anatomy; a
mismatched concentrated prior is worse than no prior at all. The
sub-floor PIS behaviour is therefore prior-specific interpolation, and
prior \emph{diversity} --- a corpus spanning many patients, body
habitus, and pathology --- not prior concentration, is the prerequisite
for any generalizable claim.

\section{Efficiency across geometries: a second incidence matrix}
\label{sec:second_matrix}

The per-bundle efficiency vector depends on the incidence matrix alone,
which makes it a quantity that can be evaluated for a candidate geometry
\emph{before} any data are acquired. At equal attenuation the Fisher
information factorizes as $\mathbf{F}=N_0 e^{-x}\,\mathbf{M}$ with
\begin{equation}
  \mathbf{M} = \mathbf{A}^{\top}\operatorname{diag}(1/n_j)\,\mathbf{A},
  \qquad n_j=\textstyle\sum_k A_{jk},
  \label{eq:M_general}
\end{equation}
where $n_j$ is the number of paths contributing to reading $j$; for the
$5\times3$ matrix of the main text this reduces to the $\mathbf{M}$
inverted there. The fraction of single-source Fisher information
retained by path $k$, referenced to a single-source scan of the same
total budget $N_S N_0$, and the corresponding inflation ratio are
\begin{equation}
  \eta_k = \frac{1}{N_S\,[\mathbf{M}^{-1}]_{kk}},
  \qquad
  r_k = \sqrt{N_S\,[\mathbf{M}^{-1}]_{kk}} = \eta_k^{-1/2},
  \label{eq:eta_general}
\end{equation}
both independent of $x$ and of $N_0$.

\paragraph{A second geometry.}
We instantiate \eqref{eq:eta_general} for the sliding-window bundle of
$N_S$ simultaneously-overlapping sources --- the natural generalization
of the $5\times3$ problem ($N_S=3$). For $N_S$ sources and $K=N_S$ paths
the incidence matrix is the $(2N_S{-}1)\times N_S$ staircase in which
path $k$ occupies $N_S$ consecutive readings, the central reading
summing all $N_S$ sources. The $N_S=4$ case is a genuinely different
$7\times4$ geometry,
\begin{equation}
  \mathbf{A}_{4} =
  \begin{pmatrix}
  1&0&0&0\\ 1&1&0&0\\ 1&1&1&0\\ 1&1&1&1\\
  0&1&1&1\\ 0&0&1&1\\ 0&0&0&1
  \end{pmatrix},
  \label{eq:A4}
\end{equation}
for which \eqref{eq:eta_general} yields endpoint efficiency
$\eta=0.308$ ($r=1.80$) and, for each of the two interior paths,
$\eta=0.138$ ($r=2.69$) --- distinctly worse than the $43\%/23\%$ of the
three-source bundle. The efficiency vector therefore discriminates
between geometries, as a design figure of merit should.

\emph{Assumption.} We model the realized $N_S$-source bundle as this
clean staircase. The actual overlap pattern of the double-drum geometry
is fixed by the angular-sampling condition
$r_\omega\le(\Delta\theta_a/2\pi)\,N_S$ of \citet{besson2015spie} and may
differ in detail; the staircase is taken as the representative motif,
consistent with the model-problem stance of the main text.

\begin{table}[t]
\centering
\caption{Per-bundle efficiency across the $N_S$-source sliding-window
family at equal attenuation, against the acquisition-speed gain
$r_\omega=N_S/3$ of the same geometry \citep[Table~I]{besson2015spie}.
``Endpoint'' is a dedicated-reading path; ``central'' is the most
interior (worst) path. $N_S=3$ is the $5\times3$ model problem; $N_S=2$
recovers the two-source conjugate pair.}
\label{tab:eta_family}
\begin{tabular}{cccccc}
\toprule
$N_S$ & $r_\omega$ & $\eta_{\mathrm{end}}$ & $r_{\mathrm{end}}$
      & $\eta_{\mathrm{cen}}$ & $r_{\mathrm{cen}}$ \\
\midrule
2 & ---  & 0.667 & 1.225 & ---   & ---   \\
3 & 1.00 & 0.429 & 1.528 & 0.231 & 2.082 \\
4 & 1.33 & 0.308 & 1.803 & 0.138 & 2.693 \\
5 & 1.67 & 0.238 & 2.049 & 0.082 & 3.493 \\
6 & 2.00 & 0.194 & 2.273 & 0.058 & 4.143 \\
\bottomrule
\end{tabular}
\end{table}

\paragraph{Two regularities.}
The endpoint efficiency obeys the closed form
\begin{equation}
  \eta_{\mathrm{end}}(N_S) = \frac{N_S}{N_S^{2}-N_S+1},
  \qquad
  r_{\mathrm{end}}(N_S) = \sqrt{\frac{N_S^{2}-N_S+1}{N_S}},
  \label{eq:eta_end}
\end{equation}
verified for $N_S\le 8$: at $N_S=3$ it is the $3/7$ and $\sqrt{7/3}$ of
the main text, and at $N_S=2$ it reproduces the $\sqrt{3/2}\approx1.22$
of the conjugate-pair geometry. The interior paths degrade far faster
than the endpoints --- the central-path efficiency falls from $23\%$
($N_S=3$) through $14\%$, $8\%$, $6\%$ as $N_S=4,5,6$, with $r$ rising
from $2.08$ to $4.14$ (Table~\ref{tab:eta_family}).

\paragraph{The design trade-off.}
Table~\ref{tab:eta_family} sets these efficiencies against the
acquisition-speed gain $r_\omega=N_S/3$ that the same geometry buys.
Speed and efficiency are conjugate: doubling the speed gain from $1.0$
to $2.0$ ($N_S:3\to6$) drives the central-path inflation from $2.08$ to
$4.14$. The efficiency vector quantifies the dose side of that exchange,
so the number of simultaneous sources is a quantity to be chosen
\emph{against} these limits rather than fixed in advance --- the role the
per-bundle analysis is meant to play in geometry selection.

\paragraph{Bundle length.}
Holding $N_S=3$ fixed and lengthening the bundle ($K>3$) leaves the
endpoints near $35$--$43\%$ but starves deep-interior paths further (the
central-path $\eta$ falling below $5\%$ by $K=15$), confirming that the
$5\times3$ block already captures the endpoint/middle asymmetry and that its
middle-path figure is, if anything, optimistic for the interior of a
long sliding window.


\end{document}